\newcommand{\calA}{\mathcal{A}}
\newcommand{\calS}{\mathcal{S}}
\newcommand{\E}{\mathbb{E}}
\newcommand{\setcov}{\mathsf{setcov}}
\DeclareMathOperator{\opt}{OPT}
\newcommand{\copt}{C_{\text{OPT}}}
\begin{document}
\title{Justifying Groups in \\Multiwinner Approval Voting}
%
%\titlerunning{Abbreviated paper title}
% If the paper title is too long for the running head, you can set
% an abbreviated paper title here
%
\author{Edith Elkind\inst{1} \and
Piotr Faliszewski\inst{2} \and
Ayumi Igarashi\inst{3} \and
Pasin Manurangsi\inst{4} \and \\
Ulrike Schmidt-Kraepelin\inst{5} \and
Warut Suksompong\inst{6}
}
\authorrunning{E. Elkind et al.}
% First names are abbreviated in the running head.
% If there are more than two authors, 'et al.' is used.
%
\institute{University of Oxford, UK \\
\email{elkind@cs.ox.ac.uk}
\and
AGH University of Science and Technology, Poland \\
\email{faliszew@agh.edu.pl}
\and
National Institute of Informatics, Japan \\
\email{ayumi\_igarashi@nii.ac.jp}
\and
Google Research, USA \\
\email{pasin@google.com}  \and
TU Berlin, Germany \\
\email{u.schmidt-kraepelin@tu-berlin.de} \and
National University of Singapore, Singapore  \\
\email{warut@comp.nus.edu.sg}
}
\maketitle              % typeset the header of the contribution
\begin{abstract}
Justified representation (JR) is a standard notion of representation in multiwinner approval voting.
Not only does a JR committee always exist, but previous work has also shown through experiments that the JR condition can typically be fulfilled by groups of fewer than $k$ candidates, where $k$ is the target size of the committee. 
In this paper, we study such groups---known as \emph{$n/k$-justifying groups}---both theoretically and empirically.
First, we show that under the impartial culture model, $n/k$-justifying groups of size less than $k/2$ are likely to exist, which implies that the number of JR committees is usually large.
We then present efficient approximation algorithms that compute a small $n/k$-justifying group for any given instance, and a polynomial-time exact algorithm when the instance admits a tree representation.
In addition, we demonstrate that small $n/k$-justifying groups can often be useful for obtaining a gender-balanced JR committee even though the problem is NP-hard.
\keywords{Justified representation  \and Multiwinner voting \and Computational social choice.}
\end{abstract}

\section{Introduction}
\label{sec:intro}

Country X needs to select a set of singers to represent it in an international song festival.
Not surprisingly, each member of the selection board has preferences over the singers, depending possibly on the singers' ability and style or on the type of songs that they perform.
How should the board aggregate the preferences of its members and decide on the group of singers to invite for the festival?

The problem of choosing a set of candidates based on the preferences of voters---be it singers for a song festival selected by the festival's board, researchers selected by the conference's program committee to give full talks, or places to include on the list of world heritage sites based on votes by Internet users---is formally studied under the name of \emph{multiwinner voting}~\cite{FSST17}.
In many applications, the voters' preferences are expressed in the form of approval ballots, wherein each voter either approves or disapproves each candidate; this is a simple yet expressive form of preference elicitation \cite{BF07,K10}.
When selecting a committee, an important consideration is that this committee adequately represents groups of voters who share similar preferences.
A natural notion of representation, which was proposed by Aziz et al.~\cite{ABC17} and has received significant interest since then, is \emph{justified representation (JR)}.
Specifically, if there are $n$ voters and the goal is to select $k$ candidates, a committee is said to satisfy JR if for any group of at least $n/k$ voters all of whom approve a common candidate, at least one of these voters approves some candidate in the committee.

A committee satisfying JR always exists for any voter preferences, and can be found by several voting procedures \cite{ABC17}.
In fact, Bredereck et al.~\cite{BFKN19} observed experimentally that when the preferences are generated according to a range of stochastic distributions, the number of JR committees is usually very high.
This observation led them to introduce the notion of an \emph{$n/k$-justifying group}, which is a group of candidates that already fulfills the JR requirement even though its size may be smaller than $k$.
Bredereck et al.~found that, in their experiments, small $n/k$-justifying groups (containing fewer than $k/2$ candidates) typically exist.
This finding helps explain why there are often numerous JR committees---indeed, to obtain a JR committee, one can start with a small $n/k$-justifying group and then extend it with arbitrary candidates.

The goal of our work is to conduct an extensive study of $n/k$-justifying groups, primarily from a theoretical perspective but also through experiments.
Additionally, we demonstrate that small $n/k$-justifying groups can be useful for obtaining JR committees with other desirable properties such as gender balance.

\subsection{Our Contribution}
\label{sec:contribution}

In \Cref{sec:general-guarantees}, we present results on $n/k$-justifying groups and JR committees for general instances.
When the voters' preferences are drawn according to the standard \emph{impartial culture (IC)} model, in which each voter approves each candidate independently with probability $p$, we establish a sharp threshold on the group size: above this threshold, all groups are likely to be $n/k$-justifying, while below the threshold, no group is likely to be.
In particular, the threshold is below $k/2$ for every value of $p$, thereby providing a theoretical explanation of Bredereck et al.'s findings~\cite{BFKN19}.
Our result also implies that with high probability, the number of JR committees is very large, which means that the JR condition is not as stringent as it may seem.
On the other hand, we show that, in the worst case, there may be very few JR committees: their number can be as small as $m-k+1$ (where $m$ denotes the number of candidates), and this is tight.

Next, in \Cref{sec:instance-specific}, we focus on the problem of computing a small $n/k$-justifying group for a given instance.
While this problem is NP-hard to approximate to within a factor of $o(\ln n)$ even in the case $n = k$ (since it is equivalent to the well-known {\sc Set Cover} problem in that case\footnote{In \Cref{app:jr-inapprox}, we extend this hardness to the case $n/k > 1$.}), we show that the simple \emph{GreedyCC} algorithm \cite{LB11,nem-wol-fis:j:submodular} returns an $n/k$-justifying group whose size is at most $O(\sqrt{n})$ times the optimal size; moreover, this factor is asymptotically tight.
We then devise a new greedy algorithm, \emph{GreedyCandidate}, with approximation ratio $O(\log(mn))$. There are several applications of multiwinner voting where
the number of candidates $m$ is either smaller or not much larger than the number of voters $n$; for such applications, the approximation ratio of GreedyCandidate is much better than that of GreedyCC.
Further, we show that if the voters' preferences admit a \emph{tree representation}, an optimal solution can be found in polynomial time.
The tree representation condition is known to encompass several other preference restrictions \cite{yang2019tree}; interestingly, we show that it also generalizes a recently introduced class of restrictions called \emph{1D-VCR} \cite{GBSF21}.

While small $n/k$-justifying groups are interesting in their own right given that they offer a high degree of representation relative to their size, an important benefit of finding such a group is that one can complement it with other candidates to obtain a JR committee with properties that one desires---the smaller the group, the more freedom one has in choosing the remaining members of the committee.
We illustrate this with a common consideration in committee selection: gender balance.\footnote{Bredereck et al.~\cite{BFI18} studied maximizing objective functions of committees subject to gender balance and other diversity constraints, but did not consider JR.}
In \Cref{sec:gender-balance}, we show that although it is easy to find a JR committee with at least one member of each gender, computing or even approximating the smallest gender imbalance subject to JR is NP-hard.
Nevertheless, in \Cref{sec:experiment}, we demonstrate through experiments that both GreedyCC and GreedyCandidate usually find an $n/k$-justifying group of size less than $k/2$; by extending such a group, we obtain a gender-balanced JR committee in polynomial time.
In addition, we experimentally verify our result from \Cref{sec:general-guarantees} in the IC model, and perform analogous experiments in two Euclidean models.

\section{Preliminaries}

There is a finite set of candidates $C = \{c_1,\dots,c_m\}$  and 
a finite set of voters $N=[n]$, where we write $[t] := \{1,\dots,t\}$ for any positive integer $t$.
Each voter $i\in N$ submits a non-empty ballot
$A_i\subseteq C$,  and the goal is to select a committee, which is a subset of $C$ of size $k$.
Thus, an instance $I$ of our problem 
can be described by a set of candidates $C$, 
a list of ballots $\calA=(A_1, \dots, A_n)$, and a positive integer 
$k \le m$; we write $I=(C, \calA, k)$.

We are interested in representing the voters according to their ballots.
Given an instance $I = (C, \calA, k)$ with $\calA=(A_1, \dots, A_n)$,  
we say that a group of voters $N'\subseteq N$ is {\em cohesive} if
$\cap_{i\in N'} A_i\neq\emptyset$. Further, we say that a committee $W$
{\em represents} a group of voters $N'\subseteq N$
if $W\cap A_i\neq\emptyset$ for some $i\in N'$. 
If candidate $c_j\in W$ is approved by voter $i\in N$, we say that $c_j$ \emph{covers} $i$.
We are now ready to state
the justified representation axiom of Aziz et al.~\cite{ABC17}.

\begin{definition}[JR]
\label{def:jr}
Given an instance $I = (C, \calA, k)$ with $\calA=(A_1, \dots, A_n)$,  
we say that a committee $W\subseteq C$ of size $k$ provides 
{\em justified representation (JR)}
for $I$ if it represents 
every cohesive group of voters $N'\subseteq N$ 
such that $|N'|\ge n/k$. 
We refer to such a committee as a \emph{JR committee}.
\end{definition}

More generally, we can extend the JR condition to groups of fewer than $k$ candidates (the requirement that this group represents every cohesive group of at least $n/k$ voters is with respect to the \emph{original} parameter $k$).
Bredereck et al.~\cite{BFKN19} called such a group of candidates an \emph{$n/k$-justifying group}.

A simple yet important algorithm in this setting is \emph{GreedyCC} \cite{LB11,nem-wol-fis:j:submodular}.
We consider a slight modification of this algorithm.
Our algorithm starts with the empty committee and iteratively adds one candidate at a time.
At each step, if there is still an unrepresented cohesive group of size at least $n/k$, the algorithm identifies a largest such group and adds a common approved candidate of the group to the committee.
If no such group exists, the algorithm returns the current set of candidates, which is $n/k$-justifying by definition.
It is not hard to verify that (our version of) GreedyCC runs in polynomial time and outputs an $n/k$-justifying group of size at most $k$.
%(see \cite[Thm.~1]{ABC17} for the proof of a similar statement).
Sometimes we may let the algorithm continue by identifying a largest unrepresented cohesive group (of size smaller than $n/k$) and adding a common approved candidate of the group.

\section{General Guarantees}
\label{sec:general-guarantees}

In order to be $n/k$-justifying, a group may need to include $k$ candidates in the worst case: this happens, e.g., when $n$ is divisible by $k$, the first $k$ candidates are approved by disjoint sets of $n/k$ voters each, and the remaining $m-k$ candidates are only approved by one voter each.
However, many instances admit much smaller $n/k$-justifying groups.
Indeed, in the extreme case, if there is no cohesive group of voters, the empty group already suffices.
It is therefore interesting to ask what happens in the average case.
We focus on the well-studied \emph{impartial culture (IC)} model, in which each voter approves each candidate independently with probability $p$.
If $p=0$, the empty group is already $n/k$-justifying, while if $p=1$, any singleton group is sufficient.
For each $p$, we establish a sharp threshold on the group size: above this threshold, all groups are likely to be $n/k$-justifying, while below the threshold, it is unlikely that any group is $n/k$-justifying.

\begin{theorem} 
\label{thm:average-case}
Suppose that $m$ and $k$ are fixed, and let $p\in(0,1)$ be a real constant and $s\in [0,k]$ an integer constant.
Assume that the votes are distributed according to the IC model with parameter $p$.

\begin{enumerate}[(a)]
\item If $p(1-p)^s < 1/k$, then with high probability as $n\rightarrow\infty$, every group of $s$ candidates is $n/k$-justifying.
\item If $p(1-p)^s > 1/k$, then with high probability as $n\rightarrow\infty$, no group of $s$ candidates is $n/k$-justifying.
\end{enumerate}
\end{theorem}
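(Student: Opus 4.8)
The plan is to reduce the $n/k$-justifying condition to a statement about a family of binomial random variables and then apply Chernoff bounds together with a union bound over the (constantly many) candidate groups. The key structural observation is that a group $S$ of $s$ candidates fails to represent a cohesive group witnessed by a common candidate $c$ precisely when $c\notin S$ and none of the witnessing voters approves any member of $S$. Accordingly, for each $c\notin S$ I would let $X_c$ denote the number of voters $i$ with $c\in A_i$ and $A_i\cap S=\emptyset$; the largest unrepresented cohesive group whose common candidate is $c$ consists of exactly these $X_c$ voters, and every unrepresented cohesive group arises this way for some $c\notin S$. Hence $S$ is $n/k$-justifying if and only if $X_c<n/k$ for all $c\notin S$. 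Under the IC model a voter contributes to $X_c$ iff it approves $c$ (probability $p$) and avoids all $s$ candidates of $S$ (probability $(1-p)^s$), independent events, so $X_c\sim\mathrm{Binomial}(n,q)$ with $q:=p(1-p)^s$ and $\E[X_c]=nq$.

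For part (a), the hypothesis $q<1/k$ gives $\E[X_c]=nq<n/k$, so $n/k=\tfrac{1}{kq}\,\E[X_c]$ is an upper-tail deviation by the constant factor $\tfrac{1}{kq}>1$. A Chernoff bound then makes $\Pr[X_c\ge n/k]$ decay like $e^{-\Omega(n)}$. Since $m$, $k$, and $s$ are fixed, there are only $\binom{m}{s}=O(1)$ groups and at most $m$ candidates per group, so a union bound over all pairs $(S,c)$ with $c\notin S$ shows that with probability $1-O(1)\cdot e^{-\Omega(n)}\to 1$ all $X_c<n/k$ simultaneously; that is, every group of size $s$ is $n/k$-justifying.

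For part (b), where $q>1/k$ and hence $\E[X_c]=nq>n/k$, I would exploit an asymmetry: to certify that a fixed $S$ is \emph{not} justifying it suffices to exhibit a single $c^\ast\notin S$ with $X_{c^\ast}\ge n/k$. (Note $q>1/k$ forces $s<m$, since $p(1-p)^m<1/k$ for all $p\in(0,1)$ whenever $k\le m$, so such a $c^\ast=c^\ast(S)$ always exists.) The event ``$S$ is $n/k$-justifying'' is contained in $\{X_{c^\ast}<n/k\}$, a lower-tail deviation of $X_{c^\ast}$ below its mean by the constant factor $\tfrac{1}{kq}<1$, so a Chernoff bound again gives probability $e^{-\Omega(n)}$. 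A union bound over the $\binom{m}{s}=O(1)$ groups then shows that with high probability no group of size $s$ is $n/k$-justifying. Note that no independence across candidates is needed here, only monotonicity of probability.

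The computations are routine; the two things to get right are (i) the combinatorial characterization of non-justification through the variables $X_c$---in particular that one may restrict to candidates $c\notin S$ and to the maximal witnessing voter set---and (ii) ensuring the Chernoff exponents are bounded away from $0$, which is exactly where the strict inequality $p(1-p)^s\ne 1/k$ and the resulting constant multiplicative gap between $\E[X_c]=nq$ and the threshold $n/k$ are used. Because $m,k,s,p$ are constants, each union bound ranges over a constant number of events, which the exponential decay comfortably absorbs as $n\to\infty$.
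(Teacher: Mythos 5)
Your proposal is correct and takes essentially the same route as the paper: the paper likewise defines, for each group $W$ and each candidate $c\notin W$, the binomial count of voters who approve $c$ but no member of $W$ (with mean $np(1-p)^s$), applies the Chernoff upper tail for part (a) and lower tail for part (b), and union-bounds over the constantly many group--candidate pairs. The only cosmetic difference is in guaranteeing a candidate outside the group in part (b): you rule out $s=m$ via $p(1-p)^m<1/k$, while the paper rules out $s=k$ by maximizing $f(p)=p(1-p)^k$ at $p^*=\frac{1}{k+1}$ with $f(p^*)=\frac{k^k}{(k+1)^{k+1}}<\frac{1}{k}$ --- the same calculus fact serving the same purpose.
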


Here, ``with high probability'' means that the probability converges to $1$ as $n\rightarrow\infty$.
To prove this result, we will make use of the following standard probabilistic bound.

\begin{lemma}[Chernoff bound] \label{lem:chernoff}
Let $X_1, \dots, X_t$ be independent random variables taking values in $[0, 1]$, and let $S := X_1 + \cdots + X_t$. Then, for any $\delta \in [0, 1]$,
$$\Pr[S \geq (1 + \delta)\E[S]] \leq \exp\left(\frac{-\delta^2 \E[S]}{3}\right)$$
and
$$\Pr[S \leq (1 - \delta)\E[S]] \leq \exp\left(\frac{-\delta^2 \E[S]}{2}\right).$$
\end{lemma}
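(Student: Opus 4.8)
The plan is to use the standard Chernoff (exponential-moment) method: exponentiate, apply Markov's inequality, factor via independence, bound each factor using that the $X_i$ lie in $[0,1]$, and finally optimize the free parameter. Throughout write $\mu := \E[S] = \sum_{i=1}^{t}\E[X_i]$, and note that for any real $\lambda$ and any $x \in [0,1]$, convexity of $x \mapsto e^{\lambda x}$ gives $e^{\lambda x} \le 1 + x(e^{\lambda}-1)$; taking expectations and using $1+y \le e^{y}$ yields $\E[e^{\lambda X_i}] \le \exp\!\big(\E[X_i](e^{\lambda}-1)\big)$, and hence, by independence, $\E[e^{\lambda S}] = \prod_{i=1}^{t}\E[e^{\lambda X_i}] \le \exp\!\big(\mu(e^{\lambda}-1)\big)$. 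This single inequality drives both tails.

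For the upper tail I fix $\lambda > 0$ and apply Markov's inequality to $e^{\lambda S}$:
$$\Pr[S \ge (1+\delta)\mu] = \Pr\!\big[e^{\lambda S} \ge e^{\lambda(1+\delta)\mu}\big] \le e^{-\lambda(1+\delta)\mu}\,\E\!\big[e^{\lambda S}\big] \le \exp\!\Big(\mu\big(e^{\lambda}-1-\lambda(1+\delta)\big)\Big).$$
Minimizing the exponent over $\lambda>0$ gives $\lambda = \ln(1+\delta)$ and the classical bound $\Pr[S \ge (1+\delta)\mu] \le \big(e^{\delta}/(1+\delta)^{1+\delta}\big)^{\mu}$. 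The lower tail is symmetric: applying Markov's inequality to $e^{-\lambda S}$ with $\lambda>0$ and the same factorization gives $\Pr[S \le (1-\delta)\mu] \le \exp\!\big(\mu(e^{-\lambda}-1+\lambda(1-\delta))\big)$, and minimizing over $\lambda$ yields $\lambda = \ln\frac{1}{1-\delta}$ and $\Pr[S \le (1-\delta)\mu] \le \big(e^{-\delta}/(1-\delta)^{1-\delta}\big)^{\mu}$ (the boundary case $\delta=1$ is covered by letting $\lambda \to \infty$, which only strengthens the bound).

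The remaining---and main---work is purely analytic: it suffices to show
$$\frac{e^{\delta}}{(1+\delta)^{1+\delta}} \le e^{-\delta^2/3} \qquad\text{and}\qquad \frac{e^{-\delta}}{(1-\delta)^{1-\delta}} \le e^{-\delta^2/2} \qquad \text{for all } \delta \in [0,1],$$
since raising both sides to the power $\mu \ge 0$ then reproduces the two stated bounds exactly. Taking logarithms, these are equivalent to $g(\delta) \le 0$ and $h(\delta) \le 0$ on $[0,1]$, where $g(\delta) := \delta - (1+\delta)\ln(1+\delta) + \delta^2/3$ and $h(\delta) := -\delta - (1-\delta)\ln(1-\delta) + \delta^2/2$, and both satisfy $g(0)=h(0)=0$. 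I expect this to be the only real friction, as the functions are transcendental. For $h$ it is clean: $h'(\delta) = \delta + \ln(1-\delta)$ and $h''(\delta) = -\delta/(1-\delta) \le 0$, so $h'$ is non-increasing with $h'(0)=0$, forcing $h' \le 0$ and hence $h \le 0$. For $g$ the sign analysis is more delicate because $g''(\delta) = \tfrac{2}{3} - \tfrac{1}{1+\delta}$ changes sign once in $[0,1]$; here I would combine $g'(0)=0$ with $g'(1) = \tfrac{2}{3} - \ln 2 < 0$ to conclude $g' \le 0$ throughout, which also explains why the looser constant $3$ (rather than the tighter $2$) is spent on the upper tail to keep this monotonicity check routine.
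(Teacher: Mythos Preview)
Your proof is correct and complete; it is the standard exponential-moment argument for Chernoff bounds, and the calculus at the end (the sign analysis of $g'$ via its unimodality together with $g'(0)=0$, $g'(1)<0$) is valid. However, there is nothing to compare against: the paper does not prove this lemma at all---it simply quotes the Chernoff bound as a ``standard probabilistic bound'' and uses it as a black box in the proof of Theorem~\ref{thm:average-case}. So your write-up goes well beyond what the paper itself supplies.
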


\begin{proof}[of \Cref{thm:average-case}]
(a)
Let $p(1-p)^s = 1/k-\varepsilon$ for some constant $\varepsilon$, and consider any group $W\subseteq C$ of size $s$.
We claim that for any candidate $c\not\in W$, with high probability as $n\rightarrow\infty$, the number of voters who approve $c$ but do not
approve any of the candidates in $W$ is less than $n/k$.
Since $m$ is constant, once this claim is established, we can apply the union bound over all candidates outside $W$ to show that $W$ is likely to be $n/k$-justifying.
Then, we apply the union bound over all (constant number of) groups of size $s$.

Fix a candidate $c\not\in W$.
For each $i\in [n]$, let $X_i$ be an indicator random variable that indicates whether voter~$i$ approves $c$ and none of the candidates in $W$; $X_i$ takes the value $1$ if so, and $0$ otherwise.
Let $X := \sum_{i=1}^n X_i$.
We have $\E[X_i] = p(1-p)^s = 1/k-\varepsilon$ for each $i$, and so $\E[X] = n(1/k-\varepsilon)$.
By Lemma~\ref{lem:chernoff}, it follows that
\[
\Pr\left[X \ge \frac{n}{k}\right] \le \exp\left(-\frac{\delta^2 n\left(\frac{1}{k}-\varepsilon\right)}{3}\right),
\]
where $\delta := \min\{1, k\varepsilon/(1-k\varepsilon)\}$ is constant.
This probability converges to $0$ as $n\rightarrow\infty$, proving the claim. 

(b)
Let $p(1-p)^s = 1/k+\varepsilon$ for some constant $\varepsilon$.
First, suppose for contradiction that $s = k$.
The derivative of $f(p):=p(1-p)^k$ is $f'(p) = (1-p)^{k-1}(1-p(k+1))$, so $f(p)$ attains its maximum at $p^*=\frac{1}{k+1}$, where $f(p^*) = \frac{k^k}{(k+1)^{k+1}}< \frac{1}{k}$, a contradiction.
Hence $s < k$.

Consider any group $W\subseteq C$ of size $s$.
We claim that for any candidate $c\not\in W$ (such a candidate exists because $s < k$), with high probability as $n\rightarrow\infty$, the number of voters who approve $c$ but do not approve any of the candidates in $W$ is greater than $n/k$.
When this is the case, $W$ is not $n/k$-justifying.
We then apply the union bound over all possible groups $W$.

Fix a candidate $c\not\in W$, and define the random variables $X_1,\dots,X_n$ and $X$ as in part (a).
We have $\E[X_i] = p(1-p)^s = 1/k+\varepsilon$ for each $i$, and so $\E[X] = n(1/k+\varepsilon)$.
By Lemma~\ref{lem:chernoff}, it follows that
\[ \Pr\left[X \le \frac{n}{k}\right] \le \exp\left(-\frac{\delta^2 n\left(\frac{1}{k}+\varepsilon\right)}{2}\right),
\]
where $\delta := k\varepsilon/(1+k\varepsilon)$ is constant.
This probability converges to $0$ as $n\rightarrow\infty$, proving the claim.
\hfill $\square$ 
\end{proof}

Theorem~\ref{thm:average-case} implies that if $p < 1/k$, then the empty group is already $n/k$-justifying with high probability, because there is unlikely to be a sufficiently large cohesive group of voters.
On the other hand, when $p > 1/k$, the threshold for the required group size $s$ occurs when $p(1-p)^s = 1/k$, i.e., $s = -\log_{1-p}(kp)$.
For $k = 10$, the maximum $s$ occurs at $p\approx 0.24$, where we have $s\approx 3.19$.
This means that for every $p\in[0,1]$, an arbitrary group of size $4$ is likely to be $n/k$-justifying.
Interestingly, the threshold for $s$ never exceeds $k/2$ regardless of $p$.

\begin{proposition}
\label{prop:average-case-half}
Suppose that $m$ and $k$ are fixed, and let $p\in(0,1)$ be a real constant and $s\ge k/2$ an integer constant.
Assume that the votes are distributed according to the IC model with parameter $p$.
Then, with high probability as $n\rightarrow\infty$, every group of size $s$ is $n/k$-justifying.
\end{proposition}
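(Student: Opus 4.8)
The plan is to derive this proposition directly from \Cref{thm:average-case}(a), which already shows that whenever $p(1-p)^s < 1/k$, with high probability every group of size $s$ is $n/k$-justifying. All the probabilistic content is therefore inherited, and what remains is to verify the purely analytic inequality $p(1-p)^s < 1/k$ for every $p \in (0,1)$ under the hypothesis $s \ge k/2$. In other words, the proposition reduces to a statement about real functions, with no randomness involved.

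To handle this, first I would exploit monotonicity in $s$. Since $1-p \in (0,1)$, the map $s \mapsto (1-p)^s$ is decreasing, so $p(1-p)^s \le p(1-p)^{k/2}$ whenever $s \ge k/2$ (here I treat $k/2$ as a real exponent, which is harmless even when $k$ is odd). Thus it suffices to bound the single function $h(p) := p(1-p)^{k/2}$ and show $\max_{p \in (0,1)} h(p) < 1/k$. A routine differentiation gives $h'(p) = (1-p)^{k/2 - 1}\bigl(1 - p(1 + k/2)\bigr)$, so the maximum is attained at $p^\ast = \frac{2}{k+2}$, with value $h(p^\ast) = \frac{2}{k+2}\bigl(\frac{k}{k+2}\bigr)^{k/2}$.

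The crux is then to establish $\frac{2}{k+2}\bigl(\frac{k}{k+2}\bigr)^{k/2} < \frac{1}{k}$ for every integer $k$, equivalently $\frac{2k}{k+2}\bigl(\frac{k}{k+2}\bigr)^{k/2} < 1$. This is the main obstacle, since it must hold uniformly over all $k$ rather than being a finite check. I would resolve it with the standard estimate $1 - x \le e^{-x}$: writing $\frac{k}{k+2} = 1 - \frac{2}{k+2}$ gives $\bigl(\frac{k}{k+2}\bigr)^{k/2} \le e^{-k/(k+2)}$. Setting $t := \frac{k}{k+2} \in (0,1)$, the left-hand side is then bounded by $2t\, e^{-t}$, and since $\frac{d}{dt}\bigl(2t e^{-t}\bigr) = 2e^{-t}(1-t) > 0$ on $(0,1)$, the function $2t e^{-t}$ is increasing there and stays below its limiting value $2/e < 1$. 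Chaining these bounds yields $\frac{2k}{k+2}\bigl(\frac{k}{k+2}\bigr)^{k/2} \le \frac{2k}{k+2}\,e^{-k/(k+2)} = 2t e^{-t} < 2/e < 1$, which establishes $h(p^\ast) < 1/k$ and hence the desired inequality for all $p$ and all $s \ge k/2$. The proposition then follows immediately from \Cref{thm:average-case}(a).
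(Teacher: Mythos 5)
Your proof is correct, and it shares the paper's top-level reduction---invoke \Cref{thm:average-case}(a) and verify the purely analytic inequality $p(1-p)^s < 1/k$---but the verification itself takes a genuinely different route. The paper handles an arbitrary integer $s \ge k/2$ directly: maximizing $f(p) = p(1-p)^s$ at $p^* = \frac{1}{s+1}$ gives $f(p^*) = \frac{1}{s(1+1/s)^{s+1}}$, and a single application of Bernoulli's inequality, $(1+1/s)^{s+1} \ge 1 + (s+1)/s > 2$, yields $f(p^*) < \frac{1}{2s} \le \frac{1}{k}$. You instead first eliminate $s$ by monotonicity, $p(1-p)^s \le p(1-p)^{k/2}$ (treating $k/2$ as a real exponent, which is indeed harmless), then maximize at $p^* = \frac{2}{k+2}$ and need two further estimates, $1-x \le e^{-x}$ and the monotonicity of $2te^{-t}$ on $(0,1)$, to land on the constant $2/e < 1$. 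All of your steps check out: the derivative $h'(p) = (1-p)^{k/2-1}\bigl(1 - p(1+k/2)\bigr)$, the bound $\bigl(\frac{k}{k+2}\bigr)^{k/2} \le e^{-k/(k+2)}$, and the increasing behavior of $2te^{-t}$ are all correct, and your final bound $k\,h(p^*) < 2/e \approx 0.74$ is in fact numerically slightly sharper at $s = k/2$ than the paper's. What the paper's route buys is brevity and transparency in $s$: the closed form $f(p^*) < \frac{1}{2s}$ makes it immediate why the threshold never exceeds $k/2$, and it directly supports the paper's follow-up remark that $k/2$ cannot be improved to $k/3$ (e.g., for $k=15$)---information that your reduction to the single exponent $k/2$ discards. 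One minor point of hygiene, shared with the paper's own proof: invoking \Cref{thm:average-case}(a) implicitly uses $s \le k$ (the theorem is stated for $s \in [0,k]$), which is harmless since only group sizes $s \le k$ are of interest.
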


\begin{proof}
By part (a) of \Cref{thm:average-case}, it suffices to show that $p(1-p)^s < 1/k$ for all $p\in(0,1)$ and integers $s \ge k/2$. 
As in the analysis of part (b) of \Cref{thm:average-case}, the function $f(p) := p(1-p)^s$ attains its maximum at $p^* = \frac{1}{s+1}$, where $f(p^*) = \frac{s^s}{(s+1)^{s+1}} = \frac{1}{s(1 + 1/s)^{s + 1}}$. 
By Bernoulli's inequality, we have $(1 + 1/s)^{s + 1} \geq 1 + (s + 1)/s > 2$. 
It follows that 
\[
p(1-p)^s \leq f(p^*) < \frac{1}{2s} \leq \frac{1}{k},
\]
as desired.
\hfill $\square$ 
\end{proof}

We remark that the proposition would not hold if we were to replace $k/2$ by $k/3$: indeed, for $k=15$, the maximum $s$ occurs at $p\approx 0.17$, where we have $s\approx 5.03 > 15/3$.

An implication of \Cref{prop:average-case-half} is that under the IC model, with high probability, every size-$k$ committee provides JR.
This raises the question of whether the number of JR committees is large even in the worst case.
The following example shows that the answer is negative: when $n$ is divisible by $k$, the number of JR committees can be as small as $m-k+1$.

\begin{example}
\label{ex:worst-case}
Assume that $n$ is divisible by $k$.
Consider an instance $I = (C, \calA, k)$ where 
\begin{itemize}
\item[] $A_1=\dots=A_{\frac{n}{k}}=\{c_1\}$;
\item[] $A_{\frac{n}{k}+1}=\dots=A_{\frac{2n}{k}}=\{c_2\}$;
\item[] $\vdots$
\item[] $A_{\frac{(k-2)n}{k}+1}=\dots=A_{\frac{(k-1)n}{k}}=\{c_{k-1}\}$;
\item[] $A_{\frac{(k-1)n}{k}+1}=\dots=A_{n}=\{c_k,c_{k+1},\dots,c_m\}$.
\end{itemize}
A JR committee must include $c_1,\dots,c_{k-1}$; for the last slot, any of the remaining $m-k+1$ candidates can be chosen.
Hence, there are exactly $m-k+1$ JR committees.
\end{example}

We complement \Cref{ex:worst-case} by establishing that, as long as every candidate is approved by at least one voter, there are always at least $m-k+1$ JR committees.\footnote{The condition that every candidate is approved by at least one voter is necessary. Indeed, if the last approval set in \Cref{ex:worst-case} is changed from $\{c_k,c_{k+1},\dots,c_m\}$ to $\{c_k\}$, then there is only one JR committee: $\{c_1,c_2,\dots,c_k\}$.}
This matches the upper bound in Example~\ref{ex:worst-case} and improves upon the bound of $m/k$ by Bredereck et al.~\cite[Thm.~3]{BFKN19}.
Moreover, the bound holds regardless of whether $n$ is divisible by~$k$.

\begin{theorem}
\label{thm:worst-case}
For every instance $I=(C, \calA, k)$ such that every candidate in $C$ is approved by some voter, at least $m-k+1$ committees of size $k$ provide JR.
\end{theorem}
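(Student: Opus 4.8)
The plan is to distinguish two cases according to $g^\star$, the minimum size of an $n/k$-justifying group, which is at most $k$ since GreedyCC always returns a justifying group of size at most $k$.

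First I would dispose of the case $g^\star \le k-1$. Fix a minimum $n/k$-justifying group $G^\star$ of size $g^\star$. Since adding candidates to a justifying group preserves the property (the set of covered voters only grows), every size-$k$ superset of $G^\star$ is a JR committee, and there are $\binom{m-g^\star}{k-g^\star}$ of them. It then suffices to invoke the elementary bound $\binom{a}{b}\ge a-b+1$ (the $b$-subsets containing one fixed $(b-1)$-subset already number $a-b+1$), which with $a=m-g^\star$ and $b=k-g^\star\ge 1$ gives at least $m-k+1$ committees.

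The substantive case is $g^\star=k$, and the first key step is to extract structure from GreedyCC. Running it here, it cannot halt before having chosen $k$ candidates; let $N_1,\dots,N_k$ be the cohesive groups it selects, let $g_1,\dots,g_k$ be the corresponding added candidates, and set $W_0=\{g_1,\dots,g_k\}$. Each $N_t$ has size at least $n/k$ and is unrepresented at the moment it is chosen, so no voter of $N_t$ approves any earlier $g_{t'}$; since every voter of $N_{t'}$ does approve $g_{t'}$, the groups are pairwise disjoint. As $k$ disjoint sets of size at least $n/k$ inside $[n]$, they must partition $N$ into parts of size exactly $n/k$, and in particular $W_0$ covers every voter.

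The final and most delicate step is a single-swap argument, and I expect it to be the main obstacle. I would show that for every candidate $c\notin W_0$ the committee $W_0\setminus\{g_t\}\cup\{c\}$ is JR for a suitable $t$. Because every candidate is approved by some voter (this is exactly where the hypothesis of the theorem enters, and where \Cref{ex:worst-case} with an unapproved candidate would break), pick a voter $i^\star$ approving $c$ and let $N_t$ be the part containing $i^\star$. The crucial observation is that any voter covered by $W_0$ through $g_t$ alone must lie in $N_t$: every voter of a part $N_s$ approves $g_s$, so a voter whose only approved member of $\{g_1,\dots,g_k\}$ is $g_t$ belongs to $N_t$. Hence deleting $g_t$ uncovers only a subset of $N_t$, and since $c$ re-covers $i^\star\in N_t$, strictly fewer than $n/k$ voters remain uncovered, whence the new committee is JR. Finally, these committees together with $W_0$ are pairwise distinct, as each contains its own defining candidate $c\notin W_0$; this yields $1+(m-k)=m-k+1$ distinct JR committees. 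The difficulty is concentrated in obtaining the exact partition from GreedyCC and in the "covered only by $g_t$ implies in $N_t$" claim, which is what guarantees that one swap repairs coverage without disturbing the other parts.
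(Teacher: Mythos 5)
Your proof is correct and takes essentially the same route as the paper's: both run GreedyCC and either stop with a justifying group of size at most $k-1$ (whose size-$k$ supersets already give at least $m-k+1$ JR committees) or extract a partition of the voters into $k$ disjoint blocks of exactly $n/k$ voters, each approving a designated committee member, and then carry out the identical single-swap argument—using the hypothesis that every candidate has an approver—to obtain $m-k$ further JR committees plus the greedy committee itself. The only cosmetic difference is that you split cases on the optimal group size $g^\star$ (running GreedyCC the full $k$ steps when $g^\star = k$) rather than on whether the greedy group after $k-1$ steps is already $n/k$-justifying.
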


\begin{proof}
We run GreedyCC for $k-1$ steps. 
If the resulting group (of size $k-1$) is already $n/k$-justifying, we can choose any of the remaining $m-k+1$ candidates as the final member of the committee.
Hence, assume that the group after $k-1$ steps is not $n/k$-justifying.
This means that each of the first $k-1$ candidates covers exactly $n/k$ voters (these sets of voters are disjoint), and the remaining $n/k$ voters are covered by another candidate. 
In particular, $n$ is divisible by $k$.
Call these blocks of $n/k$ voters $B_1,\dots,B_k$, and assume without loss of generality that the corresponding candidates are $c_1,\dots,c_k$, respectively.
For each of the remaining $m-k$ candidates, the candidate is approved by at least one voter, say in block $B_i$, so we can combine the candidate with $\{c_1,\dots,c_{i-1},c_{i+1},\dots,c_k\}$ to form a JR committee.
This yields $m-k$ distinct JR committees.
Finally, the committee $\{c_1,\dots,c_k\}$ also provides JR and differs from all of the above commitees.
It follows that there are at least $m-k+1$ JR committees.
\hfill $\square$ 
\end{proof}

\section{Instance-Specific Optimization}
\label{sec:instance-specific}

As we have seen in \Cref{sec:general-guarantees}, several instances admit an $n/k$-justifying group of size much smaller than the worst-case size~$k$.
However, the problem of computing a minimum-size $n/k$-justifying group is NP-hard to approximate to within a factor of $o(\ln n)$ even when $n = k$ (see \Cref{sec:contribution}).
In this section, we address the question of how well we can approximate such a group in polynomial time.

\subsection{GreedyCC}
\label{sec:GreedyCC}

A natural approach to computing a small $n/k$-justifying group is to simply run (our variant of) GreedyCC, stopping as soon as the current group is $n/k$-justifying.
However, as the following example shows, the output of this algorithm may be $\Theta(\sqrt{n})$ times larger than the optimal solution.

\begin{example}
\label{ex:greedy-is-bad}
Let $n=k^2$ and $m=2k$, for some $k \ge 3$.
Consider an instance $I = (C, \calA, k)$ where
\begin{itemize}
\item[] $A_1 = \dots = A_{k-1} = \{c_1\}$;
\item[] $A_{(k-1)+1} = \dots = A_{2(k-1)} = \{c_2\}$;
\item[] $\vdots$
\item[] $A_{(k-2)(k-1)+1} = \dots = A_{(k-1)^2} = \{c_{k-1}\}$;
\item[] $A_{(k-1)^2+1} = \{c_1,c_k\}$;
\item[] $A_{(k-1)^2+2} = \{c_2,c_k\}$;
\item[] $\vdots$
\item[] $A_{k(k-1)} = \{c_{k-1},c_k\}$;
\item[] $A_{k(k-1)+1} = \{c_{k+1}\}$;
\item[] $A_{k(k-1)+2} = \{c_{k+2}\}$;
\item[] $\vdots$
\item[] $A_{k^2} = \{c_{2k}\}$.
\end{itemize}
Since $c_1,\dots,c_{k-1}$ are each approved by pairwise disjoint groups of $k$ voters, while $c_k$ is approved by $k-1$ voters, GreedyCC outputs the group $\{c_1,\dots,c_{k-1}\}$.
However, the singleton group $\{c_k\}$ is already $n/k$-justifying.
The ratio between the sizes of the two groups is $(k-1) \in \Theta(\sqrt{n})$.
\end{example}

It turns out that \Cref{ex:greedy-is-bad} is already a worst-case scenario for GreedyCC, up to a constant factor.

\begin{theorem}
\label{thm:greedy-approx}
For every instance $I=(C, \calA, k)$, GreedyCC outputs an $n/k$-justifying group at most $\sqrt{2n}$ times larger than a smallest $n/k$-justifying group.
\end{theorem}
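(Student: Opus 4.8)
The plan is to track the execution of GreedyCC step by step and to compare the sizes of the cohesive groups it covers against a fixed optimal solution. Suppose GreedyCC performs $T$ steps, and that at step $t$ it adds a candidate $d_t$ covering the set $S_t$ of hitherto unrepresented voters who approve $d_t$; write $g_t := |S_t|$. Two easy structural facts will drive everything. First, the sets $S_1,\dots,S_T$ are pairwise disjoint, since every voter of $S_t$ becomes represented right after step $t$ and therefore cannot reappear in any later $S_{t'}$; consequently $\sum_{t=1}^T g_t \le n$. Second, because GreedyCC always selects a \emph{largest} unrepresented cohesive group and the unrepresented voter set only shrinks, the sequence $g_1 \ge g_2 \ge \dots \ge g_T$ is non-increasing. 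Fix a smallest $n/k$-justifying group $W^{*}$ and let $\opt := |W^{*}|$; we may assume $\opt \ge 1$, as otherwise $T = 0$ and the bound is trivial.

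The heart of the argument is the following per-step bound, which I would prove by a pigeonhole-plus-maximality argument: for every $t \in \{1,\dots,T\}$,
\[
 g_t \ \ge\ \frac{T-t+1}{\opt}.
\]
To establish it, consider the $T-t+1$ steps $t, t+1, \dots, T$. Each such step $s$ covers a cohesive group $S_s$ of size $g_s \ge n/k$, so the $n/k$-justifying group $W^{*}$ must represent $S_s$, i.e., it contains a candidate $w_s \in W^{*}$ approved by some voter $v_s \in S_s$. By the pigeonhole principle, one candidate $w \in W^{*}$ serves at least $(T-t+1)/\opt$ of these steps; let $s_1$ be the earliest of them, so $s_1 \ge t$. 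The corresponding voters $v_s$ all approve $w$, they are pairwise distinct because the sets $S_s$ are disjoint, and each is still unrepresented at step $s_1$ (being unrepresented even at its own later step $s \ge s_1$). Hence at step $s_1$ there is an unrepresented cohesive group, namely these voters approving $w$, of size at least $(T-t+1)/\opt$, and since GreedyCC picks a largest such group we get $g_{s_1} \ge (T-t+1)/\opt$. Monotonicity of the $g_t$ together with $s_1 \ge t$ then yields the displayed inequality.

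Finally I would sum the per-step bound and combine it with $\sum_t g_t \le n$:
\[
 n \ \ge\ \sum_{t=1}^T g_t \ \ge\ \frac{1}{\opt}\sum_{t=1}^T (T-t+1) \ =\ \frac{T(T+1)}{2\,\opt}.
\]
This gives $T^2 \le T(T+1) \le 2n\,\opt$, so $T \le \sqrt{2n\,\opt} \le \sqrt{2n}\cdot\opt$, where the last inequality uses $\sqrt{\opt} \le \opt$. Since $T$ is exactly the size of the group returned by GreedyCC, the claimed bound follows (and in fact the sharper $\sqrt{2n\,\opt}$ holds).

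I expect the main obstacle to be the per-step inequality, and specifically the claim that the voters witnessing the pigeonhole-selected candidate $w$ are \emph{simultaneously} unrepresented at the single step $s_1$ and pairwise distinct; the disjointness and monotonicity observations are precisely what make this step clean, after which the concluding summation is a routine arithmetic series.
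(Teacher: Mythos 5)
Your proof is correct, and it takes a genuinely different route from the paper's. The paper argues by charging: every candidate that GreedyCC selects must ``cross'' some member of a smallest $n/k$-justifying group $\copt$ (i.e., be approved by a voter who approves that member and is uncovered at selection time), and each $c_j \in \copt$ can be crossed at most $\sqrt{2n}$ times---for if $c_j$ were crossed by $s > \sqrt{2n}$ greedy picks, then at the moment the $i$-th of them is chosen, $c_j$ itself would cover at least $s-i+1$ uncovered voters (the later crossing points), so the greedily chosen candidate must cover at least as many, forcing $s + (s-1) + \cdots + 1 > n$ voters in total, a local per-OPT-candidate contradiction. You instead prove a global per-step bound on the marginal gain, $g_t \ge (T-t+1)/\opt$, by pigeonholing the representatives of the disjoint, still-unrepresented, size-$\ge n/k$ groups $S_t,\dots,S_T$ over the $\opt$ members of $W^{*}$, and then sum against $\sum_t g_t \le n$. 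Both arguments run on the same engine---greedy maximality plus disjointness of newly covered voter sets, producing an arithmetic series that must fit inside $n$---but your decomposition buys a strictly stronger conclusion: $T \le \sqrt{2n\,\opt}$, i.e., an approximation ratio of $\sqrt{2n/\opt}$, which matches the paper's $\sqrt{2n}$ only when $\opt = 1$; reassuringly, the paper's tight instance (\Cref{ex:greedy-is-bad}) has exactly $\opt = 1$, so your refinement is consistent with that lower bound, and it shows the greedy is much better whenever the optimum is large. The delicate points you flagged are all sound: $S_t$ coincides with the largest unrepresented cohesive group at step $t$ (the set of all unrepresented approvers of the selected candidate, by maximality of the group GreedyCC identifies), which justifies both the monotonicity $g_1 \ge \cdots \ge g_T$ and the step $g_{s_1} \ge (T-t+1)/\opt$; the witnesses $v_s$ are pairwise distinct by disjointness of the $S_s$ and are simultaneously unrepresented at the earliest step $s_1$ because representation only accrues over time; and the degenerate case $\opt = 0$ forces $T = 0$, as you note. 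What the paper's charging argument buys in exchange is locality---the loss is attributed to individual optimal candidates, which makes the structure of the worst-case example transparent---whereas your analysis is the cleaner vehicle for the quantitatively sharper bound.
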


\begin{proof}
Assume without loss of generality that $\copt := \{c_1,c_2,\dots,c_t\}$ is a smallest $n/k$-justifying group; our goal is to show that GreedyCC selects at most $\sqrt{2n}\cdot t$ candidates.
For $j\in[t]$, we say that a candidate $c_r\in C$ (possibly $r=j$) chosen by GreedyCC \emph{crosses} $c_j$ if $c_r$ is approved by some voter $i$ who approves $c_j$ and does not approve any candidate chosen by GreedyCC up to the point when $c_r$ is selected.
Note that each candidate selected by GreedyCC must cross some candidate in $\copt$---indeed, if not, the cohesive group of at least $n/k$ voters that forces GreedyCC to select the candidate would not be represented by $\copt$, contradicting the assumption that $\copt$ is $n/k$-justifying.

Now, we claim that for each $j\in [t]$, the candidate $c_j$ can be crossed by at most $\sqrt{2n}$ candidates in the GreedyCC solution; this suffices for the desired conclusion.
The claim is immediate if $c_j$ is approved by at most $\sqrt{2n}$ voters, because each candidate selected by GreedyCC that crosses $c_j$ must cover a new voter who approves $c_j$.
Assume therefore that $c_j$ is approved by more than $\sqrt{2n}$ voters, and suppose for contradiction that $c_j$ is crossed by more than $\sqrt{2n}$ candidates in the GreedyCC solution. Denote these candidates by $c_{\ell_1},\dots,c_{\ell_s}$ in the order that GreedyCC selects them, where $s > \sqrt{2n}$.
Notice that for each $i\in[s-1]$, when GreedyCC selects $c_{\ell_i}$, it favors $c_{\ell_i}$ over $c_j$, which would cover at least $s-i+1$ uncovered voters (i.e., the ``crossing points'' of $c_{\ell_{i}},\dots,c_{\ell_s}$ with $c_j$).
Hence, $c_{\ell_i}$ itself must cover at least $s-i+1$ uncovered voters.
Moreover, $c_{\ell_s}$ covers at least one uncovered voter.
This means that the total number of voters is at least $s+(s-1)+\dots+1 = s(s+1)/2 > n$, a contradiction.
\hfill $\square$ 
\end{proof}

\subsection{GreedyCandidate}
\label{sec:GreedyCandidate}

Next, we present a different greedy algorithm, which provides an approximation ratio of $\ln(mn) + 1$.
Note that this ratio is asymptotically better than the ratio of GreedyCC in the range $m \in 2^{o(\sqrt{n})}$; several practical elections fall under this range, since the number of candidates is typically smaller or, at worst, not much larger than the number of voters (e.g., when Internet users vote upon world heritage site candidates or students elect student council members).

To understand our new algorithm, recall that GreedyCC can be viewed as a greedy covering algorithm, where the goal is to pick candidates to cover the voters. 
Our new algorithm instead views the problem as ``covering'' the candidates.
Specifically, for a set of candidates $W\subseteq C$ to be an $n/k$-justifying group, all but at most $\ell := \lceil n/k \rceil - 1$ of the voters who approve each candidate in $C$ must be ``covered'' by $W$. 
In other words, each candidate $c\in C$ must be ``covered'' at least $[|B_{c}^0| - \ell]_+$ times, where $B_c^0$ denotes the set of voters who approve $c$ and we use the notation $[x]_+$ as a shorthand for $\max\{x, 0\}$. 
Our algorithm greedily picks in each step a candidate whose selection would minimize the corresponding potential function, $\sum_{c' \in C} [|B_{c'}| - \ell]_+$, where $B_{c'}$ denotes the set of voters who approve $c'$ but do not approve any candidate selected by the algorithm thus far.
The pseudocode of the algorithm, which we call GreedyCandidate, is presented as Algorithm~\ref{alg:greedy-cand}.
One can check that GreedyCandidate runs in polynomial time.

\begin{algorithm}[t]
  \DontPrintSemicolon
  \SetAlgoLined
  \KwIn{An instance $(C,\calA, k)$, where $A = \{A_1, \ldots, A_n\}$}
  \KwOut{An $n/k$-justifying group $W \subseteq C$}
  \BlankLine
  $\ell \leftarrow \lceil n/k \rceil - 1$\;
  $W \leftarrow \emptyset$\;
  \For{$c \in C$}{
    $B_c \leftarrow \{i \in [n] : c \in A_i\}$\;
  }
  \While{there exists $c \in C$ such that $|B_c| > \ell$}{
    \For{$c \in C$}{
        $u_c \leftarrow \sum_{c' \in C} ([|B_{c'}| - \ell]_+ - [|B_{c'} \setminus B_c| - \ell]_+)$\;
    }
    $c^* \leftarrow \arg\max_{c \in C} u_c$\;
    $W \leftarrow W \cup \{c^*\}$\;
    \For{$c \in C$}{
        $B_c \leftarrow (B_c \setminus B_{c^*})$
    }
    }
    \Return $W$
    \caption{\label{alg:greedy-cand}GreedyCandidate}
\end{algorithm}

\begin{theorem}
\label{thm:greedy-candidate-approx}
For every instance $I=(C, \calA, k)$, GreedyCandidate outputs an $n/k$-justifying group that is at most $(\ln(mn) + 1)$ times larger than a smallest $n/k$-justifying group.
\end{theorem}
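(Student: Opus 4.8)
The plan is to recognize GreedyCandidate as an instance of the greedy algorithm for \emph{submodular cover} and then run the textbook greedy analysis on the potential it minimizes. First I would pin down what the stopping condition means. Writing $\ell = \lceil n/k\rceil - 1$ and, for a partial group $W$, letting $B_{c'}(W)$ be the set of voters who approve $c'$ but no member of $W$, a group $W$ is $n/k$-justifying exactly when $|B_{c'}(W)| \le \ell$ for every candidate $c'$: an unrepresented cohesive group of size $\ge n/k$ is precisely a set of $\ge n/k > \ell$ uncovered voters sharing a common approved candidate. Hence $W$ is $n/k$-justifying iff the potential $\Phi(W) := \sum_{c' \in C}[|B_{c'}(W)| - \ell]_+$ equals $0$, so GreedyCandidate does return an $n/k$-justifying group; moreover $\Phi(\emptyset) = \sum_{c'}[|B^0_{c'}| - \ell]_+ \le \sum_{c'}|B^0_{c'}| \le mn$.

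Next I would set $g(W) := \Phi(\emptyset) - \Phi(W)$ and show that $g$ is a monotone, integer-valued, submodular function with $g(\emptyset)=0$ and $g(C)=\Phi(\emptyset)$. Monotonicity and integrality are immediate. For submodularity, fix a candidate $c'$ and observe that its contribution to $g$ is $g_{c'}(W) = \min\{\,y_{c'}(W),\,[|B^0_{c'}| - \ell]_+\,\}$, where $y_{c'}(W)$ is the number of voters approving $c'$ who are covered by some member of $W$. Here $y_{c'}$ is a monotone submodular coverage function, and $x \mapsto \min\{x, \text{const}\}$ is non-decreasing and concave; since a concave non-decreasing function of a monotone submodular function is again monotone submodular, each $g_{c'}$ is, and hence so is $g = \sum_{c'} g_{c'}$. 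This $\min$/concavity reformulation is what lets the truncation $[\cdot]_+$ interact cleanly with coverage, and I expect verifying submodularity this way to be the main obstacle — a direct computation with the $[\cdot]_+$ term is awkward, whereas everything after this point is bookkeeping. Finally I would note that GreedyCandidate adds, at each step, the candidate maximizing the marginal gain $u_c = g(W\cup\{c\}) - g(W)$, halting when $g(W) = g(C)$; this is exactly greedy submodular cover.

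To finish, let $\copt$ be a smallest $n/k$-justifying group, of size $t$, so $\Phi(\copt)=0$ and, by monotonicity, $\Phi(W \cup \copt)=0$ for every $W$. Consider the greedy iterates $W_0=\emptyset, W_1,\dots$ with $\Phi_i := \Phi(W_i)$. Telescoping $\Phi(W_i) - \Phi(W_i \cup \copt) = \Phi_i$ over the members of $\copt$ and bounding each intermediate marginal decrease by the corresponding decrease at $W_i$ (submodularity, i.e.\ decreasing marginal gains), the marginal decreases of adding the members of $\copt$ directly to $W_i$ sum to at least $\Phi_i$; hence some candidate decreases $\Phi_i$ by at least $\Phi_i/t$, and since greedy picks the best candidate, $\Phi_{i+1} \le (1 - 1/t)\Phi_i$. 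Iterating gives $\Phi_i \le (1-1/t)^i\,\Phi_0 \le e^{-i/t}\Phi_0$. Since $\Phi$ is integer-valued, the algorithm stops as soon as $\Phi_i < 1$, which occurs once $i > t\ln\Phi_0$; thus GreedyCandidate performs at most $t\ln\Phi_0 + 1 \le t\ln(mn) + t = (\ln(mn)+1)\,t$ steps, using $\Phi_0 \le mn$ and $t \ge 1$. As the group size equals the number of steps, this is the claimed bound. The degenerate case $t=0$ (equivalently $\Phi_0=0$) is trivial, since greedy then returns the empty group.
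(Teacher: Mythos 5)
Your proof is correct, and while it shares the paper's overall skeleton, it establishes the crucial per-iteration decay by a genuinely different route. Both arguments use the same potential $\Phi(W)=\sum_{c'\in C}\left[|B_{c'}(W)|-\ell\right]_+$, the same characterization that $W$ is $n/k$-justifying iff $\Phi(W)=0$, the bound $\Phi(\emptyset)\le mn$, the decay $\Phi_i\le(1-1/t)\Phi_{i-1}$, and the same endgame via integrality of $\Phi$ (the paper likewise concludes $|W|\le 1+t\ln(mn)$, with the $t=0$ case dispatched separately). The divergence is in how the decay is proved: the paper compares the greedy gain with the average gain over $\copt=\{c_1,\dots,c_t\}$ and proves the bespoke inequality $\sum_{j=1}^t\left[|B^q_{c'}\setminus B^q_{c_j}|-\ell\right]_+\le(t-1)\left[|B^q_{c'}|-\ell\right]_+$ by a two-case analysis exploiting $\left|B^q_{c'}\setminus\bigl(\bigcup_{j=1}^t B^q_{c_j}\bigr)\right|\le\ell$, i.e., the covering property of $\copt$. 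You instead recognize $g=\Phi(\emptyset)-\Phi$ as a monotone submodular function---via the identity $g_{c'}(W)=\min\{y_{c'}(W),[|B^0_{c'}|-\ell]_+\}$, which is easily checked by the three cases of the truncation, and the fact that capping a monotone submodular coverage function at a constant preserves monotone submodularity---and then run the standard Wolsey-style submodular-cover telescoping, with the optimality of $\copt$ entering only through $\Phi(W_i\cup\copt)=0$. These are two encodings of the same underlying fact, but your abstraction buys modularity and reusability: the bound holds verbatim for any integer-valued monotone submodular potential, and the delicate interaction of $[\cdot]_+$ with coverage is quarantined in one lemma, whereas the paper's direct computation is self-contained and avoids invoking (or proving) any composition fact. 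One caution: the blanket claim that a concave non-decreasing transform of a submodular function is submodular requires monotonicity of the inner function---you did state it, and for the truncation $\min\{\cdot,K\}$ the fact is standard, so your appeal is sound; spelling out the short exchange argument for the truncation would make the writeup fully self-contained.
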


\begin{proof}
First, note that whenever $|B_c| \leq \ell$ for all $c \in C$, every unrepresented cohesive group has size at most $\ell < n/k$, meaning that the output $W$ of our algorithm is indeed an $n/k$-justifying group.

Next, let us bound the size of the output $W$. 
Assume without loss of generality that $\copt := \{c_1,c_2,\dots,c_t\}$ is a smallest $n/k$-justifying group. 
If $t = 0$, then the while-loop immediately terminates and the algorithm outputs $W = \emptyset$. 
Thus, we may henceforth assume that $t \ge 1$.

For each $c\in C$, denote by $B^i_c$ the set $B_c$ after the $i$-th iteration of the while-loop (so $B^0_c$ is simply the set of voters who approve $c$).
Let $\psi_i$ denote the potential $\sum_{c' \in C} [|B_{c'}| - \ell]_+$ after the $i$-th iteration (so $\psi_0$ is the potential at the beginning). 
We will show that this potential decreases by at least a factor of $1 - 1/t$ with each iteration; more formally,
\begin{align} \label{eq:potential-reduction}
\psi_i \leq \left(1 - \frac{1}{t}\right) \cdot \psi_{i - 1}
\end{align}
for each $i\ge 1$.

Before we prove~\eqref{eq:potential-reduction}, let us show how we can use it to bound $|W|$. 
To this end, observe that when the potential is less than $1$, the while-loop terminates. 
This means that $\psi_{|W| - 1} \geq 1$. 
Furthermore, we have $\psi_0 \leq \sum_{c' \in C} |B^0_{c'}| \leq mn$. 
Applying~\eqref{eq:potential-reduction}, we get
\begin{align*}
1 \leq \psi_{|W| - 1} \leq \cdots &\leq \left(1 - \frac{1}{t}\right)^{|W| - 1} \cdot \psi_0 
\leq e^{-\frac{|W| - 1}{t}} \cdot mn,
\end{align*}
where for the last inequality we use the bound $1+x\le e^x$, which holds for any $x\in\mathbb{R}$.
Rearranging, we arrive at $|W| \leq 1 + t \ln(mn) \leq t(\ln(mn) + 1)$, as desired.

We now return to proving~\eqref{eq:potential-reduction}. 
Our assumption that $\copt$ is an $n/k$-justifying group implies that $\left|B^0_{c'} \setminus \left(\bigcup_{j=1}^t B^0_{c_j}\right)\right| \leq \ell$ for all $c' \in C$.
In each iteration, the algorithm replaces $B_c$ by $B_c \setminus B_{c^*}$ for all $c$, so we also have $\left|B^i_{c'} \setminus \left(\bigcup_{j=1}^t B^i_{c_j}\right)\right| \leq \ell$ for all $c' \in C$ and $i$.
Fix any $i\ge 1$, let $q := i-1$, and let $c^*$ be the candidate chosen in the $i$-th iteration.
From the definition of $c^*$, we have
\begin{align}
\psi_{i - 1} - \psi_{i} 
&= \sum_{c' \in C} \left(\left[|B^q_{c'}| - \ell\right]_+ - \left[|B^q_{c'} \setminus B^q_{c^*}| - \ell\right]_+\right) \nonumber \\
&\geq \frac{1}{t} \sum_{j=1}^t \sum_{c' \in C} \left(\left[|B^q_{c'}| - \ell\right]_+ - \left[|B^q_{c'} \setminus B^q_{c_j}| - \ell\right]_+\right) \nonumber \\
&= \psi_{i-1} - \sum_{c' \in C} \left(\frac{1}{t} \sum_{j=1}^t \left[|B^q_{c'} \setminus B^q_{c_j}| - \ell\right]_+\right). \label{eq:expanded-diff}
\end{align}
Consider any $c' \in C$. We claim that 
\begin{align} \label{eq:single-term-potential-loss}
\sum_{j=1}^t \left[|B^q_{c'} \setminus B^q_{c_j}| - \ell\right]_+ \leq (t - 1) \cdot \left[|B^q_{c'}| - \ell\right]_+.
\end{align}
To see that~\eqref{eq:single-term-potential-loss} holds, consider the following two cases:
\begin{itemize}
\item \underline{Case 1}: $|B^q_{c'} \setminus B^q_{c_{j'}}| \leq \ell$ for some $c_{j'} \in C_{\opt}$. 
We may assume without loss of generality that $j' = t$.
We have
\begin{align*}
\sum_{j=1}^t \left[|B^q_{c'} \setminus B^q_{c_j}| - \ell\right]_+ &= \sum_{j=1}^{t - 1} \left[|B^q_{c'} \setminus B^q_{c_j}| - \ell\right]_+ \\
&\leq \sum_{j=1}^{t - 1} \left[|B^q_{c'}| - \ell\right]_+ 
= (t - 1) \cdot \left[|B^q_{c'}| - \ell\right]_+.
\end{align*}
\item \underline{Case 2}: $|B^q_{c'} \setminus B^q_{c_j}| > \ell$ for all $c_j \in C_{\opt}$. 
This means that $|B^q_{c'}| > \ell$, and
\begin{align*}
\sum_{j=1}^t \left[|B^q_{c'} \setminus B^q_{c_j}| - \ell\right]_+ &= \sum_{j=1}^t (|B^q_{c'} \setminus B^q_{c_j}| - \ell) \\
&= t(|B^q_{c'}| - \ell) - \sum_{j=1}^t |B^q_{c'} \cap B^q_{c_j}| \\
&\leq t(|B^q_{c'}| - \ell) - (|B^q_{c'}| - \ell) = (t - 1) \cdot \left[|B^q_{c'}| - \ell\right]_+,
\end{align*}
where we use $\left|B^q_{c'} \setminus \left(\bigcup_{j=1}^t B^q_{c_j}\right)\right| \leq \ell$ for the inequality.
\end{itemize}
Hence, in both cases, \eqref{eq:single-term-potential-loss} holds. 
Plugging this back into~\eqref{eq:expanded-diff}, we get
\begin{align*}
\psi_{i - 1} - \psi_{i}
\geq \psi_{i - 1} - \frac{t - 1}{t} \cdot \sum_{c' \in C} \left[|B^q_{c'}| - \ell\right]_+ 
= \frac{1}{t} \cdot \psi_{i - 1}.
\end{align*}
This implies~\eqref{eq:potential-reduction} and completes our proof. 
\hfill $\square$ 
\end{proof}

Although we do not know whether an efficient $O(\ln(n))$-approximation algorithm exists, we show next that by combining~\Cref{thm:greedy-candidate-approx} with a brute-force approach, we can arrive at a quasi-polynomial-time\footnote{Recall that a running time is said to be \emph{quasi-polynomial} if it is of the form $\exp(\log^{O(1)} I)$, where $I$ denotes the input size (in our case, $I = (nm)^{O(1)}$).} algorithm that has an approximation ratio of $O(n^{\delta})$ for any constant $\delta > 0$.

\begin{theorem}
\label{thm:quasi-polynomial}
For any constant $\delta \in (0, 1)$ there exists an $\exp(\log^{O(1)} (nm))$-time algorithm that, on input $I=(C, \calA, k)$, outputs an $n/k$-justifying group that is at most $O(n^{\delta})$ times larger than a smallest $n/k$-justifying group.
\end{theorem}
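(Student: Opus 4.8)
The plan is to combine the logarithmic guarantee of GreedyCandidate (\Cref{thm:greedy-candidate-approx}) with brute-force enumeration, trading running time for a better dependence on $m$. Let $\copt$ be a smallest $n/k$-justifying group and $t:=|\copt|$. First I would dispatch two easy regimes. If $t\le h$ for a threshold $h=(\log(mn))^{O(1)}$, then enumerating every subset of candidates of size at most $h$ and returning the smallest one that is $n/k$-justifying takes time $m^{O(h)}=\exp\!\big((\log(mn))^{O(1)}\big)$ and is exact. If instead $\log(mn)\le n^\delta$, then GreedyCandidate already outputs a group of size at most $t(\ln(mn)+1)=O(n^\delta)\cdot t$ in polynomial time. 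Thus the only case that resists these two tools is when $m$ is large (so $\log(mn)>n^\delta$) and $t>h$, and this is where a recursive enumeration is needed.

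For the remaining case I would, as in the proof of \Cref{thm:greedy-candidate-approx}, phrase the task as driving the potential $\psi_0=\sum_{c'\in C}[\,|B^0_{c'}|-\ell\,]_+\le mn$ down to $0$, and attack it scale by scale. Fix a depth $r$ and a per-level factor $\rho$ with $\rho^r\ge mn$; I will take $\rho=n^{\Theta(\delta)}$, so that $r=O\!\big(\log(mn)/(\delta\log n)\big)$. The recursion, at a node whose residual potential is $\psi$, aims to add a batch of candidates that reduces the potential to at most $\psi/\rho$ and then recurses one level deeper on the residual instance; the base case (once $\psi<\rho$, equivalently after $r$ levels) is a single invocation of GreedyCandidate, which then finishes using only $O(t\log\rho)$ further candidates. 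The batch realizing one factor-$\rho$ reduction is itself located by a depth-$(r-1)$ recursive call restricted to this sub-goal, so each of the $r$ levels contributes a branching factor of at most $m^{O(1)}$; the total running time is therefore $m^{O(r)}=\exp\!\big(O(\log^2(mn)/\delta)\big)$, which is quasi-polynomial in the input size.

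If each level found its factor-$\rho$ reduction using only a constant factor more candidates than $\copt$ needs for the same reduction, the $r$ levels would telescope to $O(\rho)\cdot t=O(n^{\Theta(\delta)})\cdot t$ candidates, and rescaling $\delta$ would give the claimed $O(n^\delta)$ ratio. The step I expect to be the main obstacle is precisely controlling this per-level approximation loss: the naive recursive-greedy accounting charges a multiplicative penalty at every level, and since $r$ grows with $\log m$, even a $\operatorname{poly}(r)$ or $2^{O(r)}$ penalty would overwhelm the target $n^\delta$ when $m$ is large. Making the argument go through requires showing that at each scale a \emph{small}, enumerable batch of candidates from $\copt$ already achieves the factor-$\rho$ drop---so that the recursion genuinely locates it with $m^{O(1)}$ branching---and that the loss incurred per level is an absolute constant independent of $r$. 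Establishing this density property of the potential $\psi$ (the analogue of the bound~\eqref{eq:single-term-potential-loss} used for GreedyCandidate) is the crux; the two easy regimes, the runtime bookkeeping, and the telescoping are then routine.
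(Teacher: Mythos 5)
Your case split is sound as far as it goes---your Case~B threshold ($\log(mn)\le n^{\delta}$, handled by GreedyCandidate via \Cref{thm:greedy-candidate-approx}) is essentially the paper's first case---but the proof does not go through in the remaining regime, and the gap is exactly the one you flagged yourself. The ``density property'' your recursion needs is false in general: take $\copt=\{c_1,\dots,c_t\}$ whose approval sets $B^0_{c_1},\dots,B^0_{c_t}$ are pairwise disjoint and of equal size, with the potential mass spread evenly over them. Then adding any batch of $b$ candidates of $\copt$ reduces the potential only from $\psi$ to roughly $(1-b/t)\psi$, so a single factor-$\rho$ drop with $\rho=n^{\Theta(\delta)}$ forces $b\ge t(1-1/\rho)$, i.e., the batch must contain almost all of $\copt$. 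Since you are in the case $t>h=(\log(mn))^{O(1)}$, no constant-size (or even polylogarithmic-size) enumerable batch exists, and the claimed $m^{O(1)}$ branching per level collapses. Independently, even granting small batches, recursive-greedy accounting of this kind loses a constant \emph{multiplicative} factor per level, and with $r=\omega(1)$ levels this compounds to $2^{\Omega(r)}$, which would overwhelm the target ratio $n^{\delta}$; you correctly identified this obstacle but offered no way around it.

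The paper's resolution of the hard regime is much simpler and goes in a direction your proposal never considers: switch the enumeration from candidates to \emph{voters}. If $m>2^{n^{\delta}}$, then $n<\log^{1/\delta}m$, so iterating over all $2^{n}$ subsets $N'\subseteq N$ costs only $\exp(\log^{O(1/\delta)}m)$, which is quasi-polynomial. For each $N'$, one runs GreedyCC until every voter in $N'$ is covered, and outputs the smallest resulting set that is $n/k$-justifying. For the particular subset $N^{*}:=\bigcup_{c\in\copt}B_c$, the set $W_{N^{*}}$ covers every voter that $\copt$ covers, hence any cohesive group it leaves unrepresented is also unrepresented by $\copt$ and thus has size below $n/k$; and the standard greedy set-cover analysis, now over a ground set of only $n$ voters, gives $|W_{N^{*}}|\le(\ln n+1)\cdot|\copt|$, a ratio $O(\log n)\subseteq O(n^{\delta})$. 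This makes your Case~A enumeration and the entire multi-scale recursion unnecessary. If you want to salvage your outline with minimal change: in your leftover case $\log(mn)>n^{\delta}$ the number of voters is already polylogarithmic in the input size, so the right move is exhaustive search on the voter side, not a scale-by-scale attack on the potential $\psi$.
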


\begin{proof}
If $m \leq 2^{n^\delta}$, then we may simply run GreedyCandidate, which runs in polynomial time and yields an approximation ratio of $\ln(nm) + 1 \in O(n^{\delta})$. 

Otherwise, we iterate over all subsets $N' \subseteq N$.
For each $N'$, we run GreedyCC on $N'$ until all voters in $N'$ are covered (i.e., we stop only when no voter in $N'$ remains unrepresented); denote the resulting set of candidates by $W_{N'}$. 
Finally, among the $2^{|N|}$ sets, we output a smallest one that is $n/k$-justifying with respect to $N$. 
Notice that the running time of our algorithm is $O(2^n \cdot (nm)^{O(1)}) \in \exp(\log^{O(1/\delta)} m)$; this follows from $m > 2^{n^{\delta}}$. 
To analyze the approximation guarantee, let $\copt$ denote a smallest $n/k$-justifying group, and $N^* := \bigcup_{c \in \copt} B_c$ be the set of voters covered by $\copt$, where $B_c$ denotes the set of voters who approve $c$. 
When $N' = N^*$, we have that $W_{N'}$ is an $n/k$-justifying group and, from standard analyses of the greedy set cover algorithm,\footnote{See, for example, Chapter~2 in the book by Vazirani~\cite{V03}.} we have $|W_{N'}| \leq (\ln n + 1) \cdot |\copt|$. 
In other words, the approximation ratio is at most $\ln n + 1 \in O(n^\delta)$, as claimed.
\hfill $\square$ 
\end{proof}

\subsection{Tree Representation}
\label{sec:tree-representation}

Even though computing a smallest $n/k$-justifying group is NP-hard even to approximate, we show in this section that this problem becomes tractable if the instance admits a \emph{tree representation}.
An instance is said to admit a tree representation if its candidates can be arranged on a tree $T$ in such a way that the approved candidates of each voter form a subtree of $T$ (i.e., the subgraph of $T$ induced by each approval set is connected).
While the tree representation condition is somewhat restrictive, we remark that it is general enough to capture a number of other preference restrictions \cite[Fig.~4]{yang2019tree}.
In particular, we show in \Cref{app:restriction} that it encompasses a recently introduced class called \emph{1-dimensional voter/candidate range model (1D-VCR)}~\cite{GBSF21}.\footnote{Together with the results of Yang~\cite{yang2019tree} and Godziszewski et al.~\cite{GBSF21}, this
means that the tree representation also captures the \emph{candidate interval (CI)} and \emph{voter interval (VI)} domains, the two most commonly studied restrictions for the approval setting, introduced by Elkind and Lackner~\cite{EL15}.}

\begin{theorem}
\label{thm:tree-algo}
For every instance $I=(C,\mathcal{A},k)$ admitting a tree representation, a smallest $n/k$-justifying group can be computed in polynomial time. 
\end{theorem}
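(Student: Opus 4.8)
The first step is to translate the $n/k$-justifying condition into a purely combinatorial statement about the tree. As already observed in the analysis of GreedyCandidate, a group $W$ is $n/k$-justifying if and only if, for every candidate $c\in C$, at most $\ell:=\lceil n/k\rceil-1$ voters approve $c$ yet are left uncovered by $W$. Using the tree representation, fix a tree $T$ on the candidates in which each $A_i$ is a subtree. A voter $i$ is covered exactly when $W$ meets the subtree $A_i$, so after deleting $W$ from $T$ we obtain a forest, and an uncovered voter's subtree lies entirely inside a single component of $T\setminus W$. Hence the condition becomes: \emph{in each component $S$ of $T\setminus W$, every vertex $c\in S$ is contained in at most $\ell$ of the voter-subtrees that lie inside $S$.} Our task is to choose a minimum-size $W$ making every component ``$\ell$-light'' in this sense.

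My plan is to solve this by a bottom-up dynamic program over a rooted version of $T$. Root $T$ arbitrarily; for a subtree $T_v$ I introduce each voter $i$ at the top vertex $\mathrm{top}(i)$ of $A_i$ (the vertex of $A_i$ closest to the root), so that $A_i\subseteq T_{\mathrm{top}(i)}$ and voter $i$'s covered/uncovered status is fully determined once $W\cap T_{\mathrm{top}(i)}$ is fixed. For each vertex $v$ the DP decides whether $v\in W$, and stores the minimum number of selected candidates inside $T_v$ as a function of how the ``open'' component hanging from $v$ looks to the ancestors. The natural summary is the number of still-uncovered voters whose subtree contains $v$ and extends above $v$ (those that will funnel upward through the edge to $v$'s parent); since exceeding $\ell$ forces a cut, this quantity ranges over $\{0,1,\dots,\ell\}$, giving $O(n)$ states per vertex and polynomially many transitions (cut $v$ and seal the component, or keep $v$ and merge the children's open components, introducing the voters topped at $v$). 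When a component is sealed from above---either because $v$'s parent is placed in $W$ or because $v$ is the root---we can check the light condition for that now-completed component and discard infeasible states; the optimum is read off at the root.

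The crux, and the step I expect to be the main obstacle, is that the light condition is a constraint at \emph{every} vertex of a component, and the load at an internal vertex depends simultaneously on descendant cuts (already decided) and on ancestor cuts (not yet decided): a voter topped at a higher vertex can still raise the load deep inside the open component, so a single upward-count at the frontier does not suffice to certify feasibility. The resolution exploits the subtree (Helly/funnel) structure: because each approval set is connected, every voter that reaches into an open component from above must pass through that component's unique top vertex, so all future load increments are mediated by this single bottleneck and the increment delivered to any internal vertex is dominated by the increment at the top. This lets me argue that the partial solutions are distinguished by a polynomially bounded statistic---essentially the upward-passing count together with the running maximum internal load---rather than by their full internal load profile, keeping the state space polynomial while still correctly enforcing the per-vertex constraint at the moment each component is closed off. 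Establishing that this compressed statistic is genuinely sufficient (in particular, handling the merge of several child components under a branching vertex) is where the real work lies; once it is in place, correctness follows by a routine induction on $T_v$ and the running time is clearly polynomial.
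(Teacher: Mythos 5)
Your reduction of the $n/k$-justifying condition to a per-vertex ``$\ell$-lightness'' constraint on the components of $T \setminus W$ (with $\ell = \lceil n/k\rceil - 1$) is correct, and the top-vertex/funnel observation is true as far as it goes: every voter entering an open component from above passes through the component's current top, so the future increment at any internal vertex is \emph{at most} the future increment at the top. But this domination only yields an upper bound on internal loads; it does not make the pair (upward-passing count, running maximum internal load) a sufficient DP state, because feasibility at sealing time depends on \emph{where} the heavily loaded vertices sit relative to the downward extents of the voters introduced later, and that information is destroyed by the compression. Concretely, take the tree rooted at $r$ with child $v$, where $v$ has children $a_1, a_2$, and $a_1$ has child $b_1$, $a_2$ has child $b_2$. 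Let $n = 7$, $k = 2$ (so $\ell = 3$), with three voters approving $\{a_1\}$, three approving $\{b_2\}$, and one approving the path $\{r, v, a_2, b_2\}$. At frontier $v$, the two cost-$1$ partial solutions ``cut $b_1$'' and ``cut $b_2$'' have identical compressed states: upward count $0$ (the path voter is topped at $r$ and not yet introduced) and running maximum load $3$ (at $a_1$, respectively at $a_1$ and $b_2$). Yet $\{b_2\}$ is a smallest $n/k$-justifying group, while $\{b_1\}$ admits no completion of total cost $1$: once the voter topped at $r$ is introduced, $b_2$'s load rises to $4 > \ell$ in the first case but the voter is covered in the second. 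So your DP either merges these states and returns a wrong optimum, or, if it updates the running maximum by the top increment (the only update your domination lemma licenses), it wrongly rejects the feasible solution. The difficulty you flagged as ``where the real work lies'' is thus a genuine obstruction, not a technicality; repairing it seems to require states encoding the interaction between the whole load profile and future subtree shapes, and it is unclear this stays polynomial, especially at branching vertices where several open components merge.

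For comparison, the paper's proof avoids dynamic programming altogether. It runs a greedy procedure on the rooted tree: repeatedly pick a \emph{deepest} vertex $v$ for which some $\lceil n/k\rceil$ voters have all their approval sets inside the subtree $T^v$ and share a common approved candidate; add $v$ to $W$, delete $T^v$, and remove all voters whose approval set meets it. Correctness is an exchange argument maintained inductively: if an optimal solution contains a candidate $w$ strictly inside $T^v$ other than $v$, replacing $w$ by its parent preserves justification, because any group violated by the swap would have all its approval sets confined to the subtree below $w$, contradicting the maximal-depth choice of $v$; iterating this swap pushes some optimal solution to agree with every greedy choice. If you want to salvage your route, note that your lightness reformulation is exactly what makes this deepest-cut greedy easy to state and verify, so the cleanest fix is to replace the DP by that greedy and prove the exchange lemma instead of the state-sufficiency claim.
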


\begin{proof}
\begin{figure}[!ht]
\centering
\begin{tikzpicture}[scale=0.7]
\draw[line cap=round, green!20, line width=4mm] (4,4.5) -- (4,1.5);
\draw[line cap=round, green!20, line width=4mm] (4,4.5) -- (1,3) -- (0,1.5) -- (-0.5,0);
\draw[line cap=round, green!20, line width=4mm] (0,1.5) -- (0.5,0);
\draw[line cap=round, green!20, line width=4mm] (1,3) -- (2,1.5);
\draw[line cap=round, red!20, line width=4mm] (7,3) -- (6,1.5) -- (5.5,0);
\draw[line cap=round, red!20, line width=4mm] (7,3) -- (8,1.5) -- (7.5,0);
\draw[line cap=round, red!20, line width=4mm] (6,1.5) -- (6.5,0);
\draw[line cap=round, red!20, line width=4mm] (8,1.5) -- (8.5,0);
\draw (4,4.5) -- (4,3);
\draw (4,4.5) -- (1,3);
\draw (4,4.5) -- (7,3);
\draw (1,3) -- (0,1.5);
\draw (1,3) -- (2,1.5);
\draw (4,3) -- (4,1.5);
\draw (7,3) -- (6,1.5);
\draw (7,3) -- (8,1.5);
\draw (0,1.5) -- (-0.5,0);
\draw (0,1.5) -- (0.5,0);
\draw (6,1.5) -- (5.5,0);
\draw (6,1.5) -- (6.5,0);
\draw (8,1.5) -- (7.5,0);
\draw (8,1.5) -- (8.5,0);
\draw [fill] (4,4.5) circle [radius = 0.1];
\draw [fill] (4,3) circle [radius = 0.1];
\draw [fill] (1,3) circle [radius = 0.1];
\draw [fill] (7,3) circle [radius = 0.1];
\draw [fill] (0,1.5) circle [radius = 0.1];
\draw [fill] (2,1.5) circle [radius = 0.1];
\draw [fill] (4,1.5) circle [radius = 0.1];
\draw [fill] (6,1.5) circle [radius = 0.1];
\draw [fill] (8,1.5) circle [radius = 0.1];
\draw [fill] (-0.5,0) circle [radius = 0.1];
\draw [fill] (0.5,0) circle [radius = 0.1];
\draw [fill] (5.5,0) circle [radius = 0.1];
\draw [fill] (6.5,0) circle [radius = 0.1];
\draw [fill] (7.5,0) circle [radius = 0.1];
\draw [fill] (8.5,0) circle [radius = 0.1];
\node at (5.9,2.5) {\small $\overline{C_t}$};
\node at (2.1,4.2) {\small $T_t$};
\node at (7.2,3.2) {\small $v$};
\node at (8.8,0) {\small $w$};
\node at (8.3,1.5) {\small $z$};
\end{tikzpicture}
\caption{Illustration for the proof of Theorem~\ref{thm:tree-algo}.}
\label{fig:tree-algo}
\end{figure}
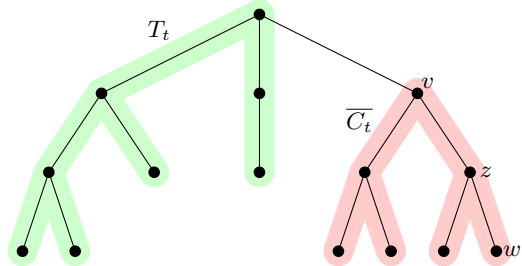

Let $T$ be a tree representation of $I$, i.e., for every $i\in N$ the set $A_i$ induces a subtree of $T$. 
Root $T$ at an arbitrary node, and define the depth of a node in~$T$ as its distance from the root node (so the root node itself has depth $0$). 
For each subtree $\widehat{T}$ of $T$, denote by $V(\widehat{T})$ the set of its nodes, and for each node $v\in V(\widehat{T})$, denote by $\widehat{T}^v$ the subtree of $\widehat{T}$ rooted at $v$ (i.e., $\widehat{T}^v$ contains all nodes in $\widehat{T}$ whose path towards the root of $\widehat{T}$ passes $v$).
The algorithm sets $W = \emptyset$ and proceeds as follows: 
\begin{enumerate}
    \item Select a node $v$ of maximum depth such that there exists a set $S$ of $\lceil n/k \rceil$ voters with the following two properties:
    \begin{enumerate}
    \item $A_i \subseteq V(T^v)$ for all $i \in S$;
    \item $\bigcap_{i \in S} A_i \neq \emptyset$.
    \end{enumerate}
    If no such node exists, delete all candidates from $C$, delete the remaining tree $T$, and return $W$.  \label{alg:trSmallestJRCommittee}
    \item Add $v$ to $W$, remove all voters $i$ such that $A_i \cap V(T^v) \neq \emptyset$ from $\mathcal{A}$, and delete $V(T^v)$ from $C$ and $T^v$ from $T$. Go back to Step~\ref{alg:trSmallestJRCommittee}. 
\end{enumerate}

Except for the last round, the algorithm adds one candidate to the set $W$ in each round, so it runs for $|W|+1$ rounds, where we slightly abuse notation and use $W$ to refer to the final output from now on. 
Each round can be implemented in polynomial time---indeed, for each node $v$, we can consider the sets $A_i$ that are contained in $V(T^v)$ and check whether some node in $V(T^v)$ appears in at least $\lceil n/k\rceil$ of these sets.

We now establish the correctness of the algorithm.
For each round $t \in \{0,1,\dots,|W|+1\}$, we define $W_t$ to be the set of candidates selected by the algorithm up to and including round~$t$, and $T_t$ to be the remaining tree after round~$t$, where round~$0$ refers to the point before the execution of the algorithm (so $W_0 = \emptyset$ and $T_0 = T$).
We also define $\overline{C_t} := V(T) \setminus V(T_t)$ to be the set of candidates deleted up to and including round~$t$.
See Figure~\ref{fig:tree-algo} for an illustration.

\begin{claim}
After each round $t \in \{0,1,\dots,|W|+1\}$,

(i) there exists a smallest $n/k$-justifying group $W'$ of the original instance~$I$ such that $\overline{C_t} \cap W' = W_t$, and

(ii) for each $i\in N$, at least one of the following three relations holds: $A_i \subseteq \overline{C_t}$, $A_i \cap \overline{C_t} = \emptyset$, or $A_i \cap W_t \neq \emptyset$. 
\end{claim}

\begin{proof}[of Claim]
We prove the Claim by induction on $t$.
For $t=0$, we have $\overline{C_0}=W_0=\emptyset$, so $A_i\cap\overline{C_0} = \emptyset$ for all $i\in N$, and both (i) and (ii) hold trivially. 
Now consider any $t \in [|W|+1]$ and assume that the Claim holds for $t-1$. 

\underline{Case $1$}: $t < |W|+1$. Let $v$ be the candidate selected in this round and $S$ be the corresponding set of voters in the algorithm.
Then, $\overline{C_t} = \overline{C_{t-1}} \cup V(T^v_{t-1})$ and $W_t = W_{t-1} \cup \{v\}$. 
Let $W'$ be a smallest $n/k$-justifying group with $\overline{C_{t-1}} \cap W' = W_{t-1}$ as guaranteed by the induction hypothesis. 
If $V(T^v_{t-1}) \cap W' = \{v\}$, then statement (i) of the Claim follows by choosing the same set $W'$. 
Assume therefore that $V(T^v_{t-1}) \cap W' \neq \{v\}$.
If $V(T^v_{t-1}) \cap W' = \emptyset$, then $W'$ does not represent the cohesive group of voters $S$ of size $\lceil n/k\rceil$, a contradiction.
Hence, there exists a candidate $w \in (V(T_{t-1}^v) \cap W')\setminus \{v\}$; let $z$ be the parent of $w$ (possibly $z=v$). 
See Figure~\ref{fig:tree-algo}.

We will show that $(W' \setminus \{w\}) \cup \{z\}$ is a smallest $n/k$-justifying group. 
Since its size is at most the size of $W'$, if it is an $n/k$-justifying group, then it is also a smallest one.
Assume for contradiction that the group is not $n/k$-justifying. 
This means there exists a group of voters $S'$ of size $\lceil n/k\rceil$ such that: (1) the voters in $S'$ approve a common candidate $y$; (2) at least one voter in $S'$ approves $w$; and (3) none of the voters in $S'$ approves any candidate in $W_{t-1} \cup \{z\}$. 

Observe that for a voter $j\in S'$ who approves $w$, we know that $A_j \cap \overline{C_{t-1}} = \emptyset$ by statement (ii) of the Claim for $t-1$.
Moreover, $A_j \subseteq V(T_{t-1}^w) \cup \overline{C_{t-1}}$ since $j$ does not approve~$z$. 
Combining the previous two sentences, we have $A_j \subseteq V(T_{t-1}^w)$. 
As a result, $y \in V(T_{t-1}^w)$ as well, and by the same arguments as for $j$ we get that $A_i \subseteq V(T_{t-1}^w)$ for all $i \in S'$. 
However, this is a contradiction to the choice of $v$ in the algorithm. 

Applying this argument (as we did between $w$ and $z$) repeatedly until we reach $v$, we obtain that there exists a smallest $n/k$-justifying group $W'$ with $\overline{C_t} \cap W' = 
W_t$.
This proves statement (i) of the Claim.

For (ii), we argue that for each voter $i \in N$, at least one of the relations $A_i \subseteq \overline{C_t}$, $A_i \cap \overline{C_t} = \emptyset$, and $A_i \cap W_t \neq \emptyset$ holds. 
Consider a voter $i$ for whom neither $A_i \subseteq \overline{C_t}$ nor $A_i \cap \overline{C_t} = \emptyset$ holds. 
If $A_i \cap \overline{C_{t-1}} \neq \emptyset$, then $A_i \cap W_{t-1} \neq \emptyset$ (and therefore $A_i \cap W_{t} \neq \emptyset$) follows from the induction hypothesis---indeed, since $A_i \not\subseteq \overline{C_t}$, it must be that $A_i \not\subseteq \overline{C_{t-1}}$.
Hence, we can assume that $A_i \cap \overline{C_{t-1}} = \emptyset$ and $A_i \cap \overline{C_t} \neq \emptyset$, which means that $A_i \cap V(T_{t-1}^v) \neq \emptyset$. 
Since $A_i$ is not a subset of $\overline{C_t}$, this implies that $v \in A_i$, and therefore $A_i \cap W_t \neq \emptyset$. 
This establishes (ii).

\underline{Case $2$}: $t=|W|+1$.
We will show that $W_{t-1}$ is an $n/k$-justifying group; assume for contradiction that it is not.
By the induction hypothesis, there exists an $n/k$-justifying group $W'$ such that $\overline{C_{t-1}} \cap W' = W_{t-1}$. 
In particular, there exists a group of voters $S'$ of size $\lceil n/k\rceil$ such that all of them approve a common candidate, at least one of them approves a candidate in $W'\setminus W_{t-1}$, and none of them approves any candidate in $W_{t-1}$.
Similarly to Case~1, we can argue that for all $i\in S'$, it holds that $A_i \subseteq C \setminus \overline{C_{t-1}}$. 
It follows that the root node of $T$ satisfies both conditions (a) and (b) in Step~1 of the algorithm, contradicting the fact that the algorithm terminated.
Hence, $W_{t-1}$ is an $n/k$-justifying group, and we can take $W' = W_{t-1}$ for statement (i) of the Claim.

For (ii), it suffices to observe that $\overline{C_t} = C$, which means that $A_i\subseteq\overline{C_t}$ for all $i\in N$.
\hfill $\square$ 
\end{proof}

As statement (i) of the claim holds in particular for $t=|W|+1$, in which case $\overline{C_t} = C$, this concludes the proof of the theorem. 
\hfill $\square$ 
\end{proof}

\section{Gender Balance}
\label{sec:gender-balance}

In the next two sections, we demonstrate that small $n/k$-justifying groups can be useful for obtaining JR committees with additional properties.
For concreteness, we consider a common desideratum: gender balance.
Indeed, in many candidate selection scenarios, it is preferable to have a balance with respect to the gender of the committee members.
Formally, assume that each candidate in $C$ belongs to one
of two types, male and female.  
For each committee
$W\subseteq C$, we define the \emph{gender imbalance} of $W$ as the
absolute value of the difference between the number of male
candidates and the number of female candidates in $W$.  A committee
is said to be \emph{gender-balanced} if its gender imbalance is $0$.

The following example shows that gender balance can be at odds with justified representation.

\begin{example}
\label{ex:balance}
Suppose that $n=k$ is even, each voter $i\in[n-1]$ only approves a male candidate $a_i$, while voter~$n$ approves female candidates $b_1,\dots,b_{n-1}$.
Any JR committee must contain all of $a_1,\dots,a_{n-1}$, and can therefore contain at most one female candidate.
But there exists a (non-JR) gender-balanced committee $\{a_1,\dots,a_{n/2},b_1,\dots,b_{n/2}\}$.
\end{example}

Example~\ref{ex:balance} is as bad as it gets: under very mild conditions, there always exists a JR committee with at least one representative of each gender.

\begin{theorem}
\label{thm:balance-one}
For every instance $I=(C, \calA, k)$ such that for each gender, some candidate of that gender is approved by at least one voter, there exists a JR committee with at least one member of each gender.
Moreover, such a committee can be computed in polynomial time.
\end{theorem}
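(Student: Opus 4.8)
The plan is to produce a JR committee greedily and then repair it for gender diversity, splitting into cases according to how many slots GreedyCC consumes. First I would run GreedyCC, which returns an $n/k$-justifying group $W$ with $|W|\le k$. The key enabling observation is that representation is monotone: if $W$ represents a cohesive group, so does every superset of $W$, and hence any size-$k$ superset of an $n/k$-justifying group is a JR committee. This gives ample freedom to fill empty slots. Concretely, if $|W|<k$, I would argue that $W$ can always be completed to a JR committee containing both genders: if $W$ already contains a candidate of each gender, fill the remaining slots arbitrarily (there are enough candidates since $m\ge k$); otherwise $W$ misses some gender, and the hypothesis guarantees a candidate of that gender, which necessarily lies outside $W$, so we add it and then fill the rest arbitrarily. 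The same completion also settles the case $|W|=k$ when $W$ already contains both genders, since $W$ is then already a JR committee.

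The only remaining---and main---difficulty is when GreedyCC uses all $k$ slots and the resulting committee $W=\{c_1,\dots,c_k\}$ is monochromatic, say all male. Here I would reuse the tight structure identified in the proof of \Cref{thm:worst-case}: whenever GreedyCC selects exactly $k$ candidates, $n$ is divisible by $k$ and the sets of newly covered voters form disjoint blocks $B_1,\dots,B_k$ partitioning $N$, each of size exactly $n/k$, with every voter in $B_i$ approving $c_i$. In particular, every voter is covered by $W$. Let $f^*$ be a female candidate approved by some voter $v$ (guaranteed by the hypothesis), let $B_i$ be the block containing $v$, and replace $c_i$ by $f^*$ to form $W'=(W\setminus\{c_i\})\cup\{f^*\}$.

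The crux is to verify that $W'$ is still JR. I would suppose, for contradiction, that some cohesive group $G$ with $|G|\ge n/k$ is unrepresented by $W'$. Since $W'$ contains every $c_j$ with $j\ne i$, and every voter approves its own block's candidate, no voter of $G$ can lie in a block $B_j$ with $j\ne i$; hence $G\subseteq B_i$, and because $|B_i|=n/k$ we conclude $G=B_i$. But $v\in B_i$ approves $f^*\in W'$, contradicting that $G$ is unrepresented. Thus $W'$ is a JR committee; the male-missing case is symmetric, using a male candidate approved by some voter. Since $k\ge 2$ (for $k=1$ no committee can contain members of both genders, so the statement is understood for $k\ge 2$), the committee $W'$ retains at least one male candidate alongside $f^*$ and is therefore gender-diverse.

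Finally, all steps run in polynomial time: GreedyCC is polynomial, and locating $f^*$ together with its block and performing the single swap is clearly polynomial. I expect the monochromatic full-committee case---and specifically the verification that a single swap preserves JR---to be the main obstacle, whereas the slack cases are routine once one invokes monotonicity of representation.
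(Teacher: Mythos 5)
Your proposal is correct and takes essentially the same route as the paper's proof: both exploit the tight structure of GreedyCC (as in the proof of \Cref{thm:worst-case}) to obtain, in the monochromatic full-committee case, that $n$ is divisible by $k$ and the voters partition into blocks $B_1,\dots,B_k$ of size exactly $n/k$ with every voter in $B_j$ approving $c_j$, and then perform the same single swap of $c_i$ for a missing-gender candidate approved by a voter in $B_i$; your explicit contradiction argument that the swap preserves JR is precisely what the paper leaves implicit. One cosmetic repair: in your $|W|<k$ case, if $W=\emptyset$ then ``add one missing-gender candidate and fill the rest arbitrarily'' could still produce a monochromatic committee, so you should add one candidate of \emph{each} gender (both exist by hypothesis, and $k\ge 2$ as you note) before filling the remaining slots.
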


\begin{proof}
As in the proof of \Cref{thm:worst-case}, we run GreedyCC for $k-1$ steps, and consider two cases.
If the resulting group of size $k-1$ is already $n/k$-justifying, we can choose the last member to be of the missing gender if necessary.
Else, we continue with the $k$-th step, and assume that the obtained committee is, say, all-female.
In this case, like in the proof of \Cref{thm:worst-case}, $n$ must be divisible by $k$, and there exist disjoint blocks of $n/k$ voters $B_1,\dots,B_k$ and candidates $c_1,\dots,c_k$ such that for each $j\in [k]$, all voters in block $B_j$ approve candidate $c_j$.
Take a male candidate who is approved by some voter, say, in block~$B_i$.
Then, the committee consisting of this candidate together with $c_1,\dots,c_{i-1},c_{i+1},\dots,c_k$ provides JR and contains members of both genders.
Since the construction of the blocks $B_1,\dots,B_k$ can be done in polynomial time, computing the committee also takes polynomial time.
\hfill $\square$ 
\end{proof}

In light of \Cref{thm:balance-one}, it is natural to ask for a JR committee with the lowest gender imbalance.
Unfortunately, our next result shows that deciding whether there exists a gender-balanced committee that provides JR, or even obtaining a close approximation thereof, is computationally hard.

\begin{theorem}
\label{thm:balance-hardness}
Even when $n=k$, there exists a constant $\varepsilon > 0$ such that distinguishing between the following two cases is NP-hard:
\begin{itemize}
\item (YES) There exists a gender-balanced JR committee;
\item (NO) Every JR committee has gender imbalance $\ge \varepsilon k$.
\end{itemize}
\end{theorem}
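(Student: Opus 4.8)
The plan is to establish NP-hardness via a gap-preserving reduction from a problem that already exhibits a constant-factor inapproximability gap, and the natural candidate is a hardness-of-approximation result for a constraint-satisfaction or covering problem. I would reduce from a version of \textsc{Max-2-SAT} or, more cleanly, from the gap version of \textsc{Vertex Cover} / \textsc{Independent Set} on bounded-degree graphs, where it is NP-hard to distinguish instances admitting a very small vertex cover from those in which every vertex cover is a constant fraction larger. Since we are working in the regime $n=k$, the JR constraint collapses to the \textsc{Set Cover} constraint (each singleton voter must be covered), so a JR committee is exactly a set of $k=n$ candidates that covers every voter. The key design task is therefore to build a voting instance in which the JR-feasible committees correspond to ``solutions'' of the source problem, while the gender labels are assigned so that the gender imbalance of a JR committee tracks how far its structure deviates from the intended (balanced) solution.

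Concretely, I would proceed as follows. First, I would fix a source instance with a known constant gap, and from it construct candidates split into a male group and a female group, together with voters (recall each voter's approval set, as a singleton or small set, forces certain candidates into any JR committee). The construction should be engineered so that in a YES-instance of the source problem there is a JR committee using exactly equal numbers of male and female candidates (gender imbalance $0$), whereas any JR committee in a NO-instance is forced---by the covering constraints---to draw disproportionately from one gender, yielding imbalance at least $\varepsilon k$ for some absolute constant $\varepsilon>0$. The second step is to verify the two directions of the gap: the completeness direction (YES source $\Rightarrow$ balanced JR committee exists) is typically a direct construction, while the soundness direction (NO source $\Rightarrow$ every JR committee is imbalanced) requires arguing that the covering constraints genuinely force the imbalance. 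Throughout, I would keep $n=k$ by padding or by choosing the number of voters to equal the committee size, ensuring the reduction stays within the stated regime.

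The main obstacle I anticipate is the soundness/gap-amplification step: I must ensure that the \emph{inability} to solve the source problem well translates into a \emph{quantitative} lower bound $\varepsilon k$ on gender imbalance for \emph{every} JR committee, not merely that balanced committees fail to be JR. This is delicate because a JR committee has freedom in its last few slots (as \Cref{ex:worst-case} and \Cref{thm:balance-one} illustrate), so the construction must remove this slack---for instance by making the covering constraints rigid enough that almost all of the $k$ committee members are forced, with only a controlled number of ``free'' slots whose gender cannot compensate for the structural imbalance. A secondary technical point is calibrating the two genders' candidate pools and the voters' approval sets so that the forced portion of any covering solution inherits the full constant gap of the source instance; I would handle this by placing the gap-bearing gadget entirely on one gender's candidates and balancing with a rigid ``counterweight'' of the other gender, so that imbalance is a linear function of the source objective. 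Once this structural correspondence is set up, the final constant $\varepsilon$ follows from the constant gap of the source problem.
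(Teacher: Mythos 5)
Your proposal follows essentially the same route as the paper: the paper also uses the observation that for $n=k$ the JR condition is exactly \textsc{Set Cover}, and gives a gap-preserving reduction from a constant-gap covering problem (gap-\textsc{X3C}, i.e., \textsc{Set Cover} with $3$-element sets), placing all covering candidates on the female side, adding forced female singletons, and sizing a male pool so that a YES instance admits an exactly gender-balanced JR committee while in a NO instance the forced female members exceed $k/2$ by a linear amount, yielding imbalance at least $\varepsilon k$ no matter how the free slots are filled. Your blueprint leaves the gadget uninstantiated (and one small inversion: in the paper the male ``counterweight'' is mostly unforced---the rigidity lies entirely on the female side), but these are routine details, and the overall architecture and gap analysis coincide with the paper's proof.
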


It follows from \Cref{thm:balance-hardness} that one cannot hope to obtain any finite (multiplicative) approximation of the gender imbalance.
To establish this hardness, we reduce from a special case of the {\sc Set Cover} problem.
Recall that in {\sc Set Cover}, we are given a universe $[u] = \{1,\dots,u\}$ and a collection $\calS = \{S_1, \dots, S_M\}$ of subsets of $[u]$. 
The goal is to select as few subsets as possible that together cover the universe; we use $\opt_{\setcov}(u, \calS)$ to denote the optimum of a {\sc Set Cover} instance $(u, \calS)$. 
We consider a special case of {\sc Set Cover} where $|S_1| = \dots = |S_M| = 3$; this problem is sometimes referred to as {\sc Exact Cover by 3-Sets (X3C)}.
We will need the following known APX-hardness of {\sc X3C}.\footnote{This hardness follows from the standard NP-hardness reduction for the exact version of {\sc X3C}~\cite{GJ79} together with the PCP Theorem~\cite{ALM92,AS92}. 
For a more explicit statement of \Cref{lem:x3c-hardness}, see, e.g., Lemma~27 in the extended version of \cite{GLLM19}.}

\begin{lemma} \label{lem:x3c-hardness}
For some constant $\zeta \in (0, 1/3)$, the following problem is NP-hard: 
Given an {\sc X3C} instance $(u, \calS)$, distinguish between
\begin{itemize}
\item (YES) $\opt_{\setcov}(u, \calS) = u/3$;
\item (NO) $\opt_{\setcov}(u, \calS) \geq u(1/3 + \zeta)$.
\end{itemize}
\end{lemma}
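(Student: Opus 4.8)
The plan is to establish Lemma~\ref{lem:x3c-hardness} as a gap-preserving reduction from a gap version of a bounded-occurrence satisfiability problem, whose hardness is furnished by the PCP Theorem. The guiding observation is that for every {\sc X3C} instance, any set cover uses at least $u/3$ triples, since each triple covers exactly three elements, with equality precisely when the chosen triples partition $[u]$, i.e., form an exact cover. Hence $\opt_{\setcov}(u,\calS) = u/3$ in the (YES) case is equivalent to saying that an exact cover exists, while the (NO) case asserts that every cover must ``waste'' at least $\zeta u$ triples through over-coverage. The task therefore reduces to exhibiting a family of instances for which distinguishing ``an exact cover exists'' from ``every cover wastes a constant fraction of triples'' is NP-hard.

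The reduction proceeds in three stages. First, I would invoke the PCP Theorem~\cite{ALM92,AS92} to obtain a constant $\varepsilon > 0$ for which gap-\textsc{Max-3SAT} is NP-hard: it is NP-hard to distinguish satisfiable $3$-CNF formulas from those in which at most a $(1-\varepsilon)$ fraction of the clauses can be satisfied simultaneously. Second, I would apply a standard expander-based degree-reduction (as used by Papadimitriou and Yannakakis in the study of \textsc{Max SNP}) to pass to a \emph{bounded-occurrence} variant in which every variable appears in at most a constant number $B$ of clauses; this step preserves a (smaller) constant gap $\varepsilon'$ and, crucially, forces the total instance size to be linear in the number of clauses. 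Third, I would apply the classical Garey--Johnson gadget reduction~\cite{GJ79} from $3$-SAT to {\sc X3C}, arranged so that a satisfying assignment maps to an exact partition of the universe; thus every satisfiable formula yields a (YES) instance with $\opt_{\setcov} = u/3$, which settles the completeness direction.

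The main obstacle, and the part demanding the most care, is the soundness (NO) direction. Here one must show that any cover of size less than $(1/3+\zeta)u$ can be decoded into a truth assignment that violates only a small fraction of clauses. The gadgets are built so that a triple selection with small \emph{excess} (few wasted triples) is forced to be almost consistent: each variable gadget can be covered in essentially one of two canonical ways encoding true/false, and each inconsistency or each unsatisfied clause gadget leaves behind $\Omega(1)$ elements that can only be covered by incurring additional wasted triples. Because every variable occurs in at most $B$ clauses, the gadgets have bounded size, so $u = \Theta(\#\text{clauses})$ and the number of violated clauses is at most a constant multiple of the excess $\opt_{\setcov} - u/3$. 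Consequently, if no assignment satisfies more than a $(1-\varepsilon')$ fraction of clauses, then every cover must waste at least $\zeta u$ triples for a suitable $\zeta = \Omega(\varepsilon'/B) \in (0,1/3)$, giving the (NO) guarantee. Choosing the constants appropriately completes the argument; a fully explicit statement along these lines is recorded in the extended version of~\cite{GLLM19}.
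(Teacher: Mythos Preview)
The paper does not actually prove this lemma; it treats the statement as known and relegates the justification to a footnote, which says only that the hardness ``follows from the standard NP-hardness reduction for the exact version of {\sc X3C}~\cite{GJ79} together with the PCP Theorem~\cite{ALM92,AS92}'' and points to Lemma~27 in the extended version of~\cite{GLLM19} for an explicit formulation. Your proposal is precisely an expanded version of that footnote: PCP gives gap-\textsc{Max-3SAT}, degree reduction yields a bounded-occurrence version so that instance size is linear in the number of clauses, and the Garey--Johnson gadgets translate satisfiability into exact cover while charging each violated clause a constant amount of excess in any set cover. This is the standard route and matches the paper's cited approach, so there is nothing to correct or contrast.
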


\begin{proof}[of \Cref{thm:balance-hardness}]
Given an instance of {\sc X3C}, we construct an instance of our problem as follows.
First, we create one female candidate for each set $S_i$, and one voter for each element of $[u]$ so that the voter approves all sets $S_i$ to which the element belongs. 
(Hence, each candidate is approved by exactly three voters.)
Next, we create $u/3+1$ additional voters, each of whom approves a new female candidate; this candidate is distinct for distinct voters.
Finally, we create one more voter who approves $2u/3+1$ male candidates.
Let $k = 4u/3+2$, and note that the number of voters is $n = k$.

In the YES case of {\sc X3C}, we can choose $u/3$ original female candidates so that each original voter approves at least one of them.
We then choose all $u/3+1$ new female candidates and all $2u/3+1$ male candidates, and obtain a gender-balanced JR committee.

On the other hand, in the NO case, we must choose at least $u(1/3+\zeta)$ original female candidates in order for every original voter to approve at least one of them.
Moreover, JR requires that we choose all $u/3+1$ new female candidates.
Hence, every JR committee contains at least $(2u/3+1)+\zeta u$ female candidates, and therefore has gender imbalance at least $2\zeta u = 3\zeta(k-2)/2$, which is at least $\zeta k$ for sufficiently large $k$.
Choosing $\varepsilon = \zeta$ yields the desired result.
\hfill $\square$ 
\end{proof}

In spite of this hardness result, \Cref{prop:average-case-half} implies that under the IC model, with high probability, there exists an $n/k$-justifying group of size at most $k/2$.
When this is the case, one can choose the remaining members so as to make the final committee of size $k$ gender-balanced.
In the next section, we show empirically that under several probabilistic models, a small $n/k$-justifying group can usually be found efficiently via the greedy algorithms from Section~\ref{sec:instance-specific}.

\section{Experiments}\label{sec:experiment}
In this section, we conduct experiments to evaluate and complement our theoretical results. 
In the first experiment, we illustrate our probabilistic result for the impartial culture model (Theorem~\ref{thm:average-case}), and examine whether analogous results are likely to hold for two other random models. 
In our second experiment, we analyze how well GreedyCC and GreedyCandidate perform in finding small $n/k$-justifying groups.    
The code for our experiments is available at http://github.com/Project-PRAGMA/Justifying-Groups-SAGT-2022.

\subsection{Set-up} 

We consider three different models for generating approval instances, all of which have been previously studied in the literature \cite{BFKN19}.\footnote{In particular, we refer to the work of Elkind et al.~\cite{EFJS17} for motivation of the Euclidean models.} 
Each model takes as input the parameters $n$ (number of voters), $m$ (number of candidates), and one additional parameter, namely, either an approval probability $p$ or a radius~$r$. 
\begin{itemize}
    \item In the \emph{impartial culture (IC)} model, each voter approves each of the $m$ candidates independently with probability~$p$. 
    This model was already used in Theorem~\ref{thm:average-case}. 
    \item In the \emph{1D-Euclidean (1D)} model, each voter/candidate is assigned a uniformly random point in the interval $[0,1]$. 
    For a voter~$v$ and a candidate~$c$, let $x_v$ and $x_c$ be their respective assigned points. 
    Then, $v$ approves $c$ if and only if $|x_v - x_c| \leq r$. Observe that the resulting profile belongs to the 1D-VCR class discussed in Appendix~\ref{app:restriction}. 
    \item The \emph{2D-Euclidean (2D)} model is a natural generalization of the 1D model where each voter/candidate is assigned a uniformly random point in the unit square $[0,1] \times [0,1]$. 
    Then, a voter $v$ approves a candidate $c$ if and only if the Euclidean distance between their points is at most $r$. 
\end{itemize}

The experiments were carried out on a system with 1.4 GHz Quad-Core Intel Core i5 CPU, 8GB RAM,
and macOS 11.2.3 operating system. 
The software was implemented in Python 3.8.8 and the libraries matplotlib 3.3.4, numpy 1.20.1, and pandas 1.2.4 were used. Additionally, gurobi 9.1.2 was used to solve integer programs.

\begin{figure*}[h!]
\begin{subfigure}{\textwidth} \centering
     \includegraphics[scale =0.21]{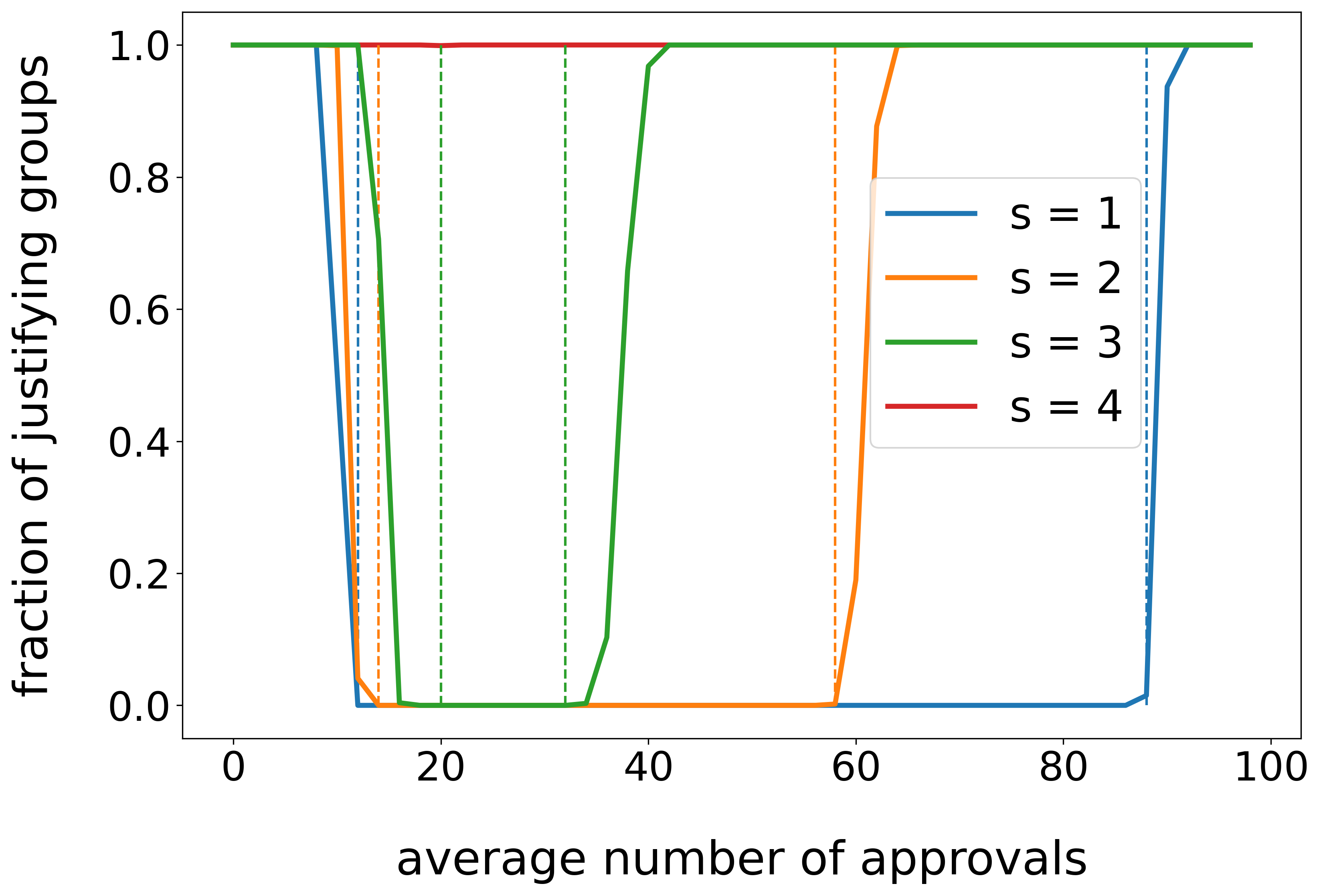}\caption{IC Model} \label{fig:exp1IC}
     \end{subfigure}
     
     \vspace{2mm}
    \begin{subfigure}{0.5\textwidth}\centering
     \includegraphics[scale=0.21]{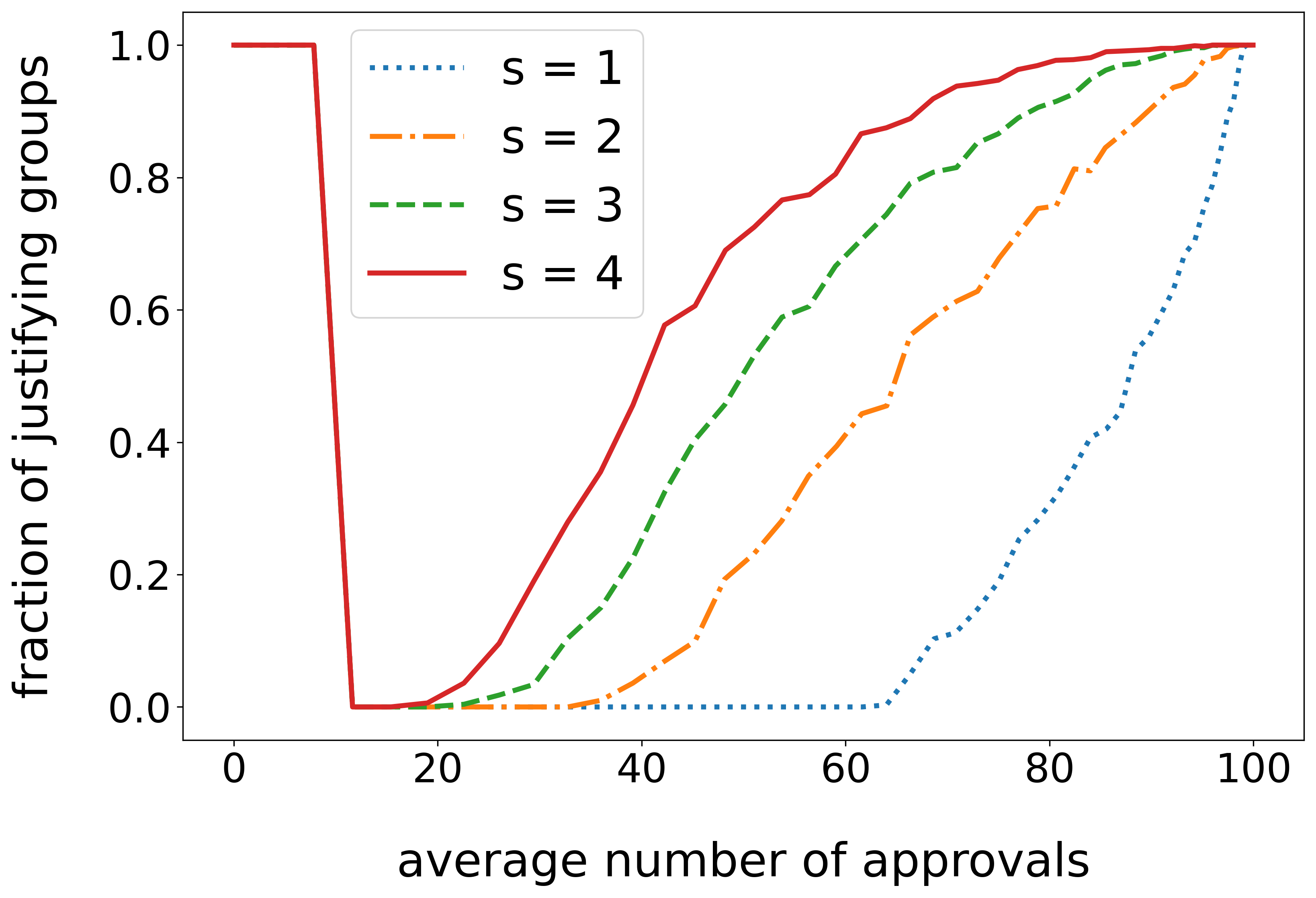}\caption{1D-Euclidean Model}
     \label{fig:exp1_1D}
     \end{subfigure}
     \begin{subfigure}{0.5\textwidth}\centering
     \includegraphics[scale=0.21]{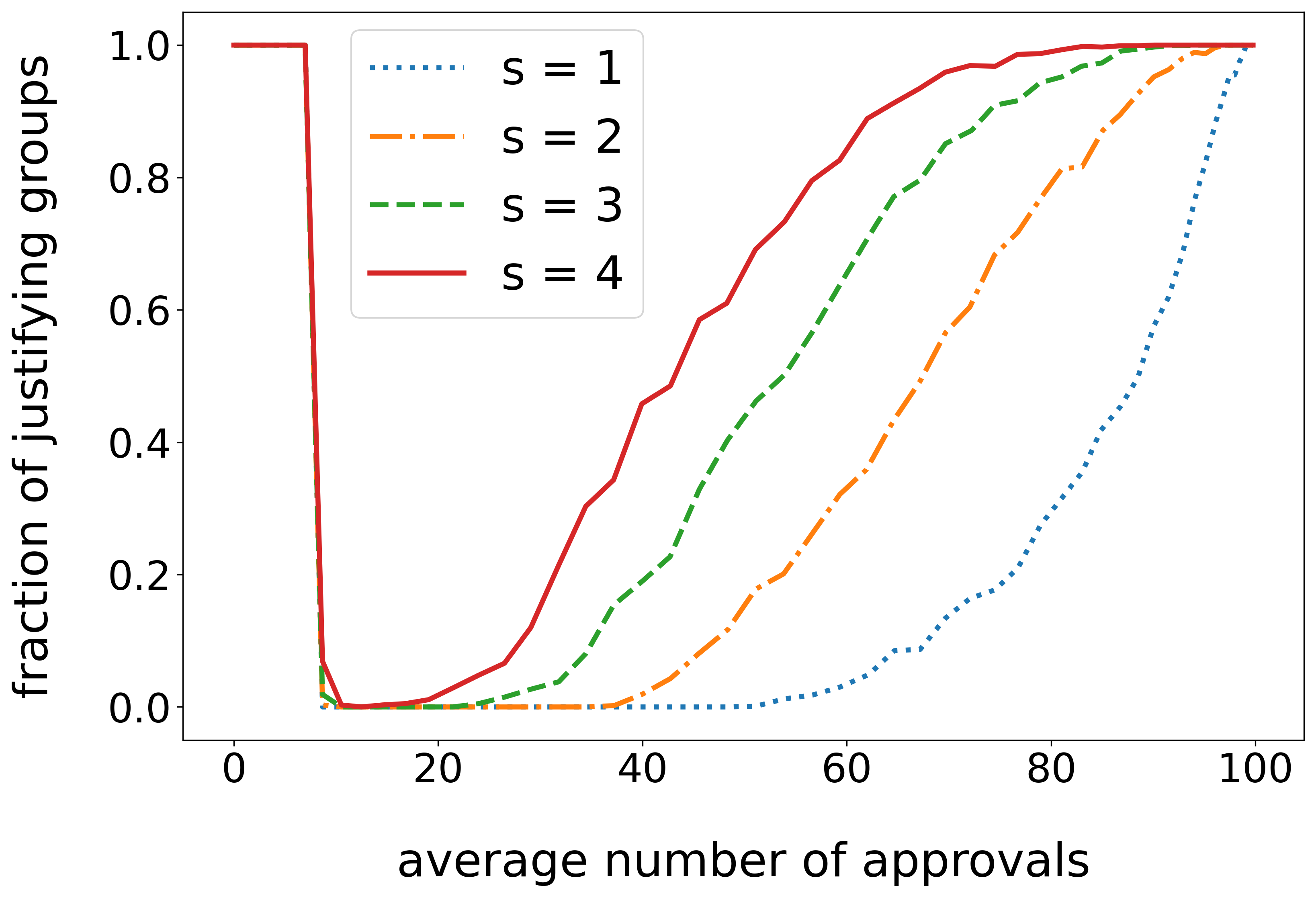}\caption{2D-Euclidean Model}
     \label{fig:exp1_2D}
     \end{subfigure}
    \caption{Experimental results evaluating Theorem~\ref{thm:average-case} as well as analogous settings for two Euclidean models. 
    For each plot, the $x$-axis shows the average number of approvals of a voter for each parameter $p$ (or $r$), and the $y$-axis shows the fraction of the $1000$ generated elections for which a randomly selected size-$s$ group is $n/k$-justifying.
    The dashed vertical lines indicate the transition points for large $n$ as shown in Theorem~\ref{thm:average-case}.}\label{fig:experiment1}
\end{figure*}

\subsection{Empirical Evaluation of Theorem \ref{thm:average-case}}

For our first experiment, we focus on elections with parameters $n=5000$, $m=100$, and $k=10$. 
We chose a large number of voters as the statement of Theorem~\ref{thm:average-case} concerns large values of $n$.  
For each $s \in \{1,2,3,4\}$ and $p \in [0,1)$ (in increments of $0.02$), we generated $1000$ elections using the IC model with parameter $p$. 
We then sampled one group of size $s$ from each resulting election and checked whether it is $n/k$-justifying. 

Figure~\ref{fig:exp1IC} illustrates the fraction of generated elections for which this is the case.
To make this plot comparable to analogous plots for the other two models, we label the $x$-axis with the average number of approvals instead of $p$; this number is simply $p\cdot m = 100p$. 
For each~$s$, the area between the vertical dashed lines indicates the range of the interval $[0,100]$ for which Theorem~\ref{thm:average-case} shows that the probability that no size-$s$ group is $n/k$-justifying converges to $1$ as $n\rightarrow\infty$; this corresponds to the range of $p$ such that $p(1-p)^s > 1/k$. 
For $s=4$, Theorem~\ref{thm:average-case} implies that all size-$s$ groups are likely to be $n/k$-justifying for any average number of approvals in $[0,100]$ (i.e., for any $p \in [0,1]$) as $n\rightarrow\infty$. 
Hence, there are no vertical dashed lines for $s=4$. 

In Figure \ref{fig:exp1IC}, we see that the empirical results match almost exactly the prediction of Theorem \ref{thm:average-case}. 
Specifically, for $s\in \{1,2\}$, we observe a sharp fall and rise in the fraction of $n/k$-justifying groups precisely at the predicted values of $p$.
For $s=3$, the empirical curve falls slightly before and rises slightly after the predicted points marked by the dashed lines. 
This is likely because the function $p(1-p)^3$ is very close to $1/k$ in the transition areas, so $\varepsilon := 1/k - p(1-p)^3$ as defined in the proof of Theorem~\ref{thm:average-case} is very small; thus a larger value of $n$ is needed in order for the transition to be sharp.  

We carried out analogous experiments for the two Euclidean models. 
In particular, we iterated over $s \in \{1,2,3,4\}$ and $r \in [0,1)$ (for the 1D model) or $r \in [0,1.2)$ (for the 2D model), again in increments of $0.02$.
To make the plots for different models comparable, we compute the average number of approvals induced by each value of $r$ and label this number on the $x$-axis.
The resulting plots, shown in Figures~\ref{fig:exp1_1D}--\ref{fig:exp1_2D}, differ significantly from the plot for the IC model. 
In particular, while we see a sharp fall in the fraction of $n/k$-justifying groups when the average number of approvals is around $10$ (for all $s$), there is no sharp rise as in the IC model. 
This suggests that a statement specifying a sharp threshold analogous to Theorem~\ref{thm:average-case} for the IC model is unlikely to hold for either of the Euclidean models. 
Nevertheless, it remains an interesting question whether the fraction of $n/k$-justifying groups can be described theoretically for these models.

\subsection{Performance of GreedyCC and GreedyCandidate}

For our second experiment, we consider elections with parameters $n=m=100$ and $k=10$, and iterate over $p \in [0,1)$ (for the IC model), $r \in [0,1)$ (for the 1D model), and $r \in [0,1.2)$ (for the 2D model), each in increments of $0.02$. 
For each value of $p$ (or $r$), we generated $200$ elections and computed the minimum size of an $n/k$-justifying group (via an integer program) and the size of the $n/k$-justifying group returned by GreedyCC and GreedyCandidate, respectively.
We aggregated these numbers across different elections by computing their average.
As in the first experiment, to make the plots for different models comparable, we converted the values of $p$ and $r$ to the average number of approvals induced by these values. 
The results are shown in Figure~\ref{fig:experiment2}.

\begin{figure*}
\begin{subfigure}{\textwidth} \centering
     \includegraphics[scale =0.21]{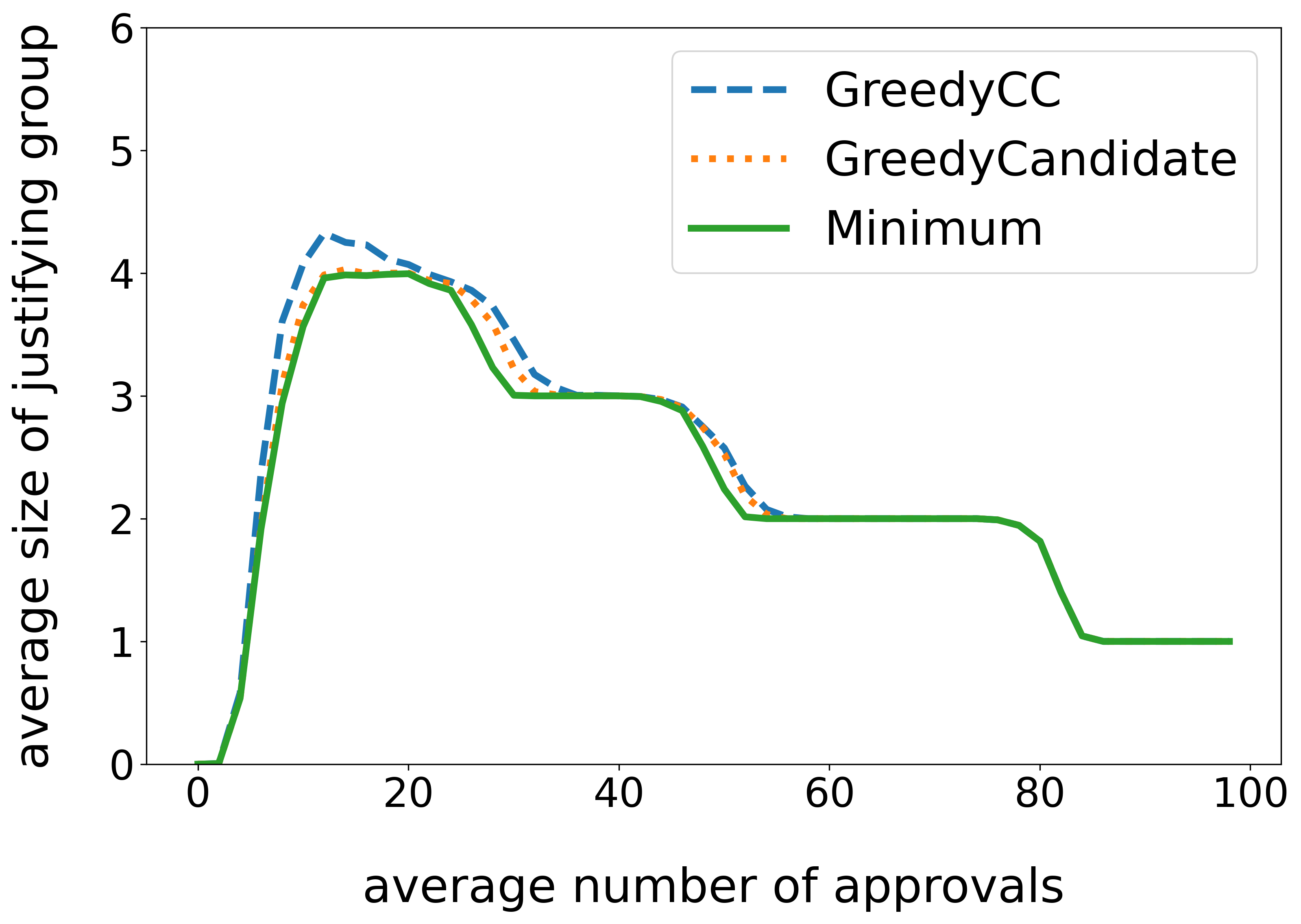}\caption{IC Model } \label{fig:exp2IC}
     \end{subfigure}
     
     \vspace{2mm}
     \begin{subfigure}{0.5\textwidth}\centering
     \includegraphics[scale = 0.21]{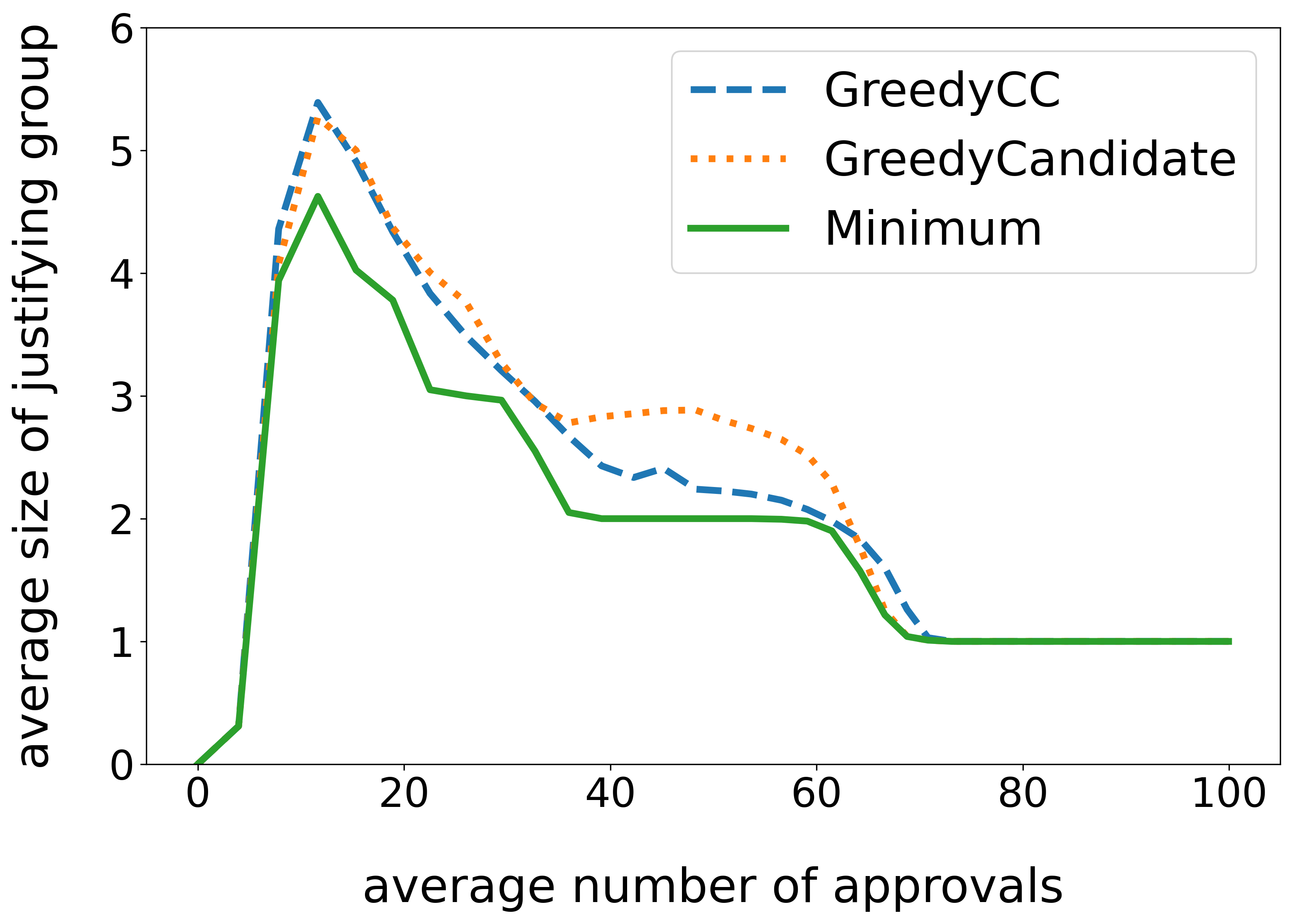}\caption{1D-Euclidean Model}
     \end{subfigure}
     \begin{subfigure}{0.5\textwidth}\centering
     \includegraphics[scale=0.21]{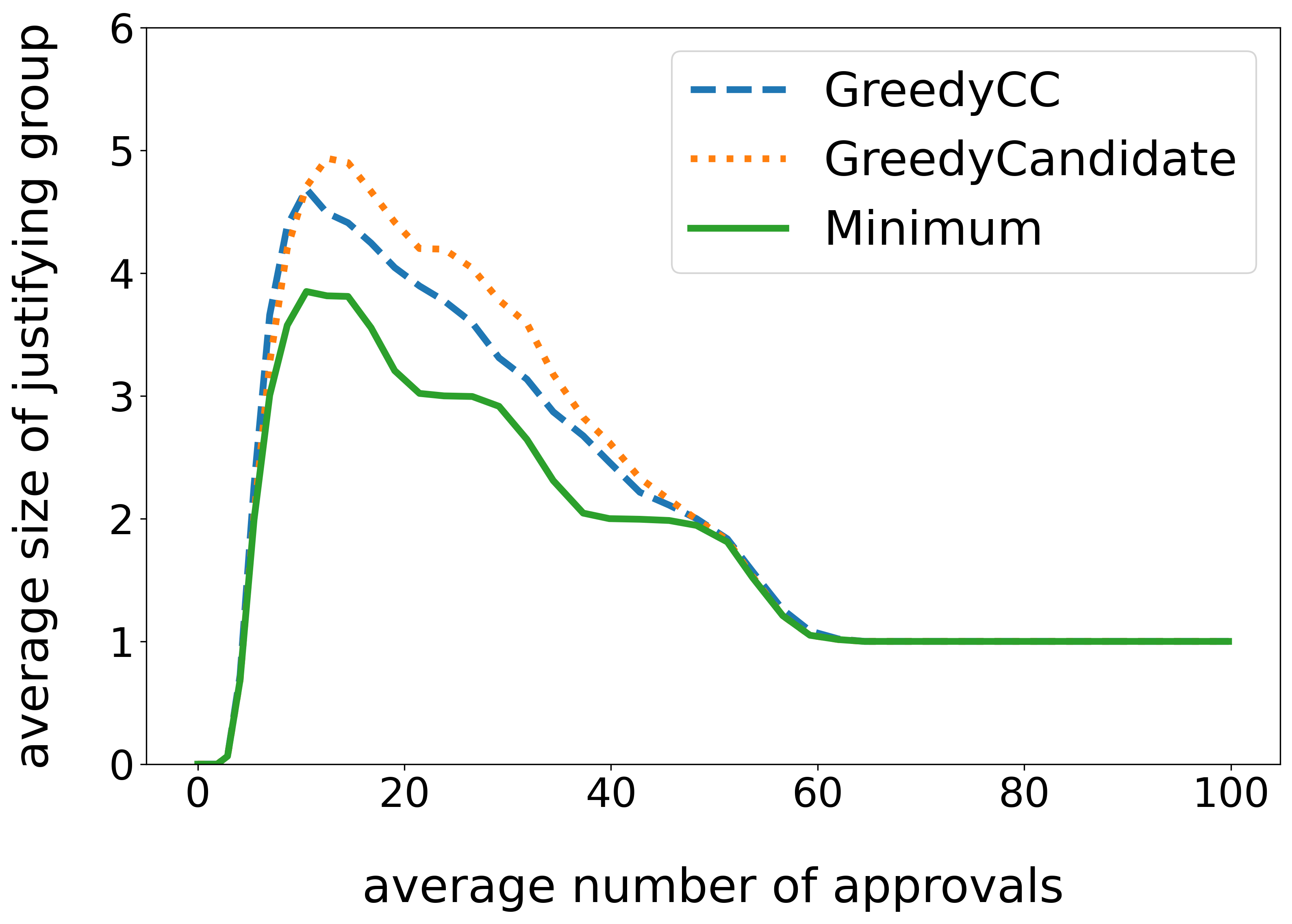}\caption{2D-Euclidean Model}
     \end{subfigure}
    \caption{Experimental results on the performance of GreedyCC and GreedyCandidate. 
    For each plot, the $x$-axis shows the average number of approvals of a voter for each parameter $p$ (or $r$), and the $y$-axis shows the average size of an $n/k$-justifying group output by GreedyCC and GreedyCandidate as well as the average size of a smallest $n/k$-justifying group.
    }\label{fig:experiment2}
\end{figure*}
In general, we observe that both GreedyCC and GreedyCandidate provide decent approximations to the minimum size of an $n/k$-justifying group. 
More precisely, the average difference between the size of a justifying group returned by GreedyCC and the minimum size is less than $1$ for all three models and parameters, while for GreedyCandidate this difference is at most $1.3$. 
The standard deviation of the size of justifying groups returned by GreedyCC and GreedyCandidate is similar for the two Euclidean models and below $1$ for all tested parameters. For the IC model, GreedyCandidate induces a smaller variance than GreedyCC.
Moreover, on average, both greedy algorithms found $n/k$-justifying groups of size at most $k/2 = 5$ for almost all models and parameters---the only exception is the 1D model when the expected number of approvals is around $11$. 
In absolute numbers, for this set of parameters, GreedyCC returned a justifying group of size larger than $k/2$ for $84$ of the $200$ instances and GreedyCandidate for $75$ of the $200$ instances.
Interestingly, among all $32000$ generated instances across all parameters, there was exactly one for which a smallest $n/k$-justifying group was of size larger than $5$.
It is also worth noting that even though GreedyCandidate has a better worst-case guarantee than GreedyCC, this superiority is not reflected in the experiments. 
In particular, while GreedyCandidate performs marginally better than GreedyCC under the IC model, GreedyCC yields slightly better approximations under the Euclidean models.

We also repeated these experiments with $n = 5000$; the results are shown in \Cref{app:experiments}.
Notably, the plot for the IC model shows a clearer step function than Figure~\ref{fig:exp2IC}.

\section{Conclusion and Future Work}

We have investigated the notion of an $n/k$-justifying group introduced by Bredereck et al.~\cite{BFKN19}, which allows us to reason about the justified representation (JR) condition with respect to groups smaller than the target size~$k$.
We showed that $n/k$-justifying groups of size less than $k/2$ typically exist, which means that the number of committees of size $k$ satisfying JR is usually large.
We also presented approximate algorithms for computing a small justifying group as well as an exact algorithm when the instance admits a tree representation.
By starting with such a group, one can efficiently find a committee of size $k$ fulfilling both JR and gender balance, even though the problem is NP-hard in the worst case.

Given the typically large number of JR committees, a natural direction is to impose desirable properties on the committee on top of JR.
In addition to gender balance, several other properties have been studied by Bredereck et al.~\cite{BFI18}.
For instance, when organizing an academic workshop, one could require that at least a certain fraction of the invitees be junior researchers, or that the invitees come from a certain number of countries or continents.
We expect that algorithms for computing small justifying groups will be useful for handling other diversity constraints as well.
It would also be interesting to study analogs of $n/k$-justifying groups for the more demanding representation notions of \emph{proportional justified representation (PJR)} and \emph{extended justified representation (EJR)}, in particular to see whether these analogs yield qualitatively different results.

\subsubsection{Acknowledgments.}

This project has received funding from the European 
Research Council (ERC) under the European Union’s Horizon 2020 
research and innovation programme (grant agreement No 101002854), 
from the Deutsche Forschungsgemeinschaft under grant
BR 4744/2-1, from JST PRESTO under grant number JPMJPR20C1, from the Singapore Ministry of Education under grant number MOE-T2EP20221-0001, and from an NUS Start-up Grant.
We would like to thank the anonymous SAGT reviewers for their comments.
\vspace{0.5mm}
\begin{center}
    \noindent \includegraphics[width=2.5cm]{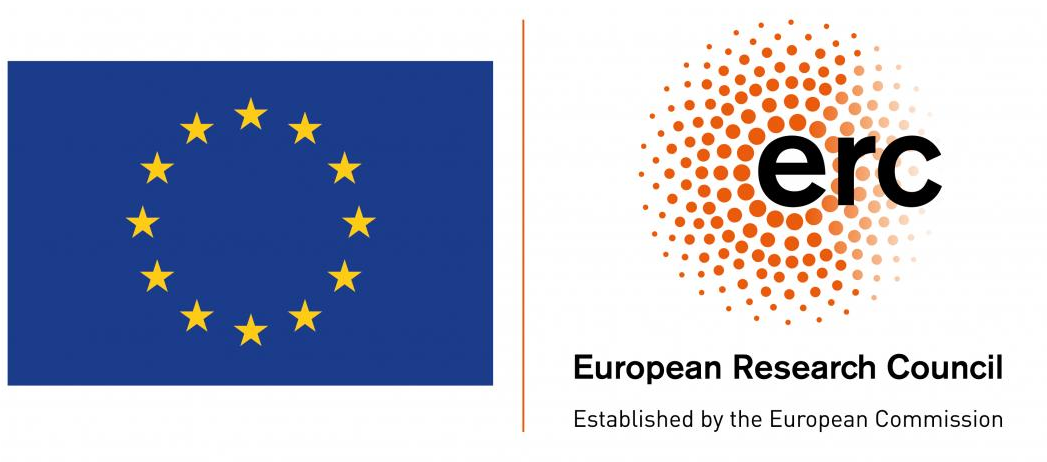}
\end{center}

\bibliographystyle{splncs04}
\bibliography{sagt22}

\appendix

\section{Hardness of Approximating $n/k$-Justifying Group}
\label{app:jr-inapprox}

As we observed in \Cref{sec:contribution}, when $n = k$, the problem of computing a small $n/k$-justifying group for a given instance is equivalent to the {\sc Set Cover} problem, which is NP-hard to approximate to within a factor of $o(\ln n)$.
Below, we show that this hardness holds even when $n > k$.
We first consider the case where the ratio $n/k$ is constant.

\begin{theorem} \label{thm:jr-inapprox}
Let $\ell > 1$ be a constant integer. For any constant $\varepsilon > 0$, it is NP-hard to find an $n/k$-justifying group that is at most $(1 - \varepsilon) \ln n$ times larger than a smallest $n/k$-justifying group, even when $n/k = \ell$.
\end{theorem}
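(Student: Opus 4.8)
The plan is to reduce from the standard inapproximability of {\sc Set Cover} and to replicate the $n=k$ equivalence while inflating the voter population so that the ratio $n/k$ equals the prescribed constant $\ell$. Recall that when $n=k$, computing a smallest $n/k$-justifying group is exactly {\sc Set Cover}: each candidate corresponds to a set, each voter to an element, and a group is $n/k$-justifying precisely when every voter (i.e.\ every cohesive singleton) is covered, since the threshold $n/k=1$ forces every approved candidate to be represented. The hardness of approximating {\sc Set Cover} to within $(1-\varepsilon)\ln(\#\text{elements})$ is the engine I would borrow.

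First I would take a {\sc Set Cover} instance with universe of size $u$ and collection $\calS$, and build the candidate set so that there is one candidate $c_S$ per set $S\in\calS$ (plus possibly a small number of auxiliary candidates). The core $u$ voters each approve exactly the candidates $c_S$ whose set $S$ contains that element; covering these $u$ voters is equivalent to solving {\sc Set Cover}. The key design issue is the threshold: with $n=\ell k$, a group is $n/k$-justifying iff it represents every cohesive group of size at least $n/k=\ell$. I therefore need the ``hard'' cohesive groups to each have size exactly $\ell$ rather than $1$. The natural fix is to replace each single element-voter by a \emph{bundle} of $\ell$ identical voters who all approve the same candidates; then a set of candidates represents a bundle iff it represents the corresponding element, so the $n/k$-justifying groups again correspond bijectively to set covers, while now $n=\ell u$ and I can set $k=u$ so that $n/k=\ell$. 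The remaining candidates/voters are padding added only to realize the exact arithmetic and to ensure no spurious small cohesive groups arise.

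The main obstacle I expect is controlling the relationship between the approximation factor and the quantity that appears in the logarithm. The {\sc Set Cover} hardness gives a factor $(1-\varepsilon')\ln u$ in terms of the \emph{universe size} $u$, whereas the theorem demands a factor $(1-\varepsilon)\ln n$ in terms of the \emph{number of voters} $n=\ell u$. Since $\ln n=\ln u+\ln\ell$ and $\ell$ is a constant, $\ln n=(1+o(1))\ln u$ as $u\to\infty$, so any hardness factor of the form $(1-\varepsilon')\ln u$ translates into $(1-\varepsilon)\ln n$ after slightly adjusting the constant $\varepsilon'<\varepsilon$; I would make this translation explicit and verify it swallows the additive $\ln\ell$ term. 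A secondary technical point is to confirm that the bundled construction introduces no cohesive group of size $\ge\ell$ other than the intended element-bundles (and possibly harmless ones already covered), so that the reduction preserves optimal solution values exactly; this is a routine but necessary check.

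Finally, I would record that the reduction is polynomial-time and that the YES/NO gap of {\sc Set Cover} maps directly onto a corresponding gap for the justifying-group problem, yielding the stated $(1-\varepsilon)\ln n$ inapproximability under the constraint $n/k=\ell$. The argument is essentially a gap-preserving padding of the $n=k$ case, so the conceptual content lies entirely in the bundling trick and the logarithm bookkeeping rather than in any new combinatorial insight.
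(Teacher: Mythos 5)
Your proposal is correct and matches the paper's proof in all essentials: the paper likewise reduces from the $(1-\delta)\ln u$ inapproximability of {\sc Set Cover}, sets $k=u$ and $n=\ell u$ with a cohesive bundle of $\ell$ voters forcing coverage of each element, and absorbs the additive $\ln\ell$ term exactly as you describe (choosing $u$ large enough that $\ln u \ge (2\ln\ell)/\varepsilon$). The only difference is cosmetic: instead of $\ell$ identical clones, the paper gives each element $j$ a fresh candidate $c^*_j$ approved by voter $j$ together with $\ell-1$ new voters---which makes $k \le m$ automatic but requires a small replacement step to remove the $c^*_j$'s from the algorithm's output---whereas your clone construction avoids that step entirely and only needs the routine guarantee $|\calS| \ge u$ (e.g., by duplicating sets) so that the committee size $k=u$ does not exceed the number of candidates.
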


To establish this hardness, we use a reduction from the {\sc Set Cover} problem.
Recall that in {\sc Set Cover}, we are given a universe $[u] = \{1,\dots,u\}$ and a collection $\calS = \{S_1, \dots, S_M\}$ of subsets of $[u]$. 
The goal is to select as few subsets as possible that together cover the universe; we use $\opt_{\setcov}(u, \calS)$ to denote the optimum of a {\sc Set Cover} instance $(u, \calS)$. 

\begin{theorem}[\cite{DinurS14,Moshkovitz15}] \label{thm:setcover-inapprox}
For any constant $\delta > 0$, it is NP-hard to find a set cover of size at most $(1 - \delta) \ln u \cdot \opt_{\setcov}(u, \calS)$.
\end{theorem}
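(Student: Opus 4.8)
The plan is to derive this tight threshold through the classical Feige-style reduction from a gap version of {\sc Label Cover}, composed with a combinatorial \emph{partition system} gadget that is responsible for the logarithmic factor. First I would invoke the PCP theorem together with parallel repetition to fix, for parameters to be tuned later, a projection {\sc Label Cover} instance: a bipartite constraint graph with left vertices $V$, right vertices $W$, alphabets $\Sigma_V,\Sigma_W$, and a projection $\pi_e\colon\Sigma_V\to\Sigma_W$ on each edge $e$, such that it is NP-hard to distinguish the \emph{completeness} (YES) case, in which a single labeling satisfies every edge, from the \emph{soundness} (NO) case, in which every labeling satisfies at most a $\gamma$ fraction of the edges. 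Parallel repetition lets me drive $\gamma$ arbitrarily close to $0$ while keeping the instance of polynomial size, and I would also assume the standard regularity and degree normalizations on the constraint graph.

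The second ingredient is a partition system on a ground set $\mathcal U$ of size $m$: a family of partitions $p_1,\dots,p_L$ of $\mathcal U$, each into $k$ equal blocks, with the key covering property that, although every single partition covers $\mathcal U$ with its own $k$ blocks, any family that takes at most one block per partition needs at least $(1-\varepsilon)\,k\ln m$ blocks before it can cover $\mathcal U$. A uniformly random choice of partitions has this property with high probability once $m$ is large relative to $k$, and the system can be found within the allotted time. The gap between covering \emph{within one partition} (cost $k$) and covering \emph{across distinct partitions} (cost $\approx k\ln m$) is precisely the source of the eventual $\ln u$ factor.

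The reduction then forms a {\sc Set Cover} instance whose universe is $W\times\mathcal U$, with one set for each pair $(v,a)$ of a left vertex $v$ and a label $a\in\Sigma_V$; this set covers, inside each copy $\{w\}\times\mathcal U$ belonging to a neighbor $w$ of $v$, a block of the partition system selected according to $a$ and its projection $\pi_{(v,w)}(a)$, the blocks being arranged so that labels consistent across an edge contribute blocks of a common partition. In the completeness case, a globally satisfying labeling yields a cover using essentially one set per left vertex, since consistency makes the contributed blocks exhaust a single partition in each copy. In the soundness case, any small cover would, by the partition-system property, have to cover almost every copy $\{w\}\times\mathcal U$ from within few partitions, i.e.\ from mutually consistent labels, which by soundness is impossible; hence every cover is a factor $\approx\ln m$ larger. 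Writing $u=|W\times\mathcal U|$ and balancing $m$ against $\gamma$ makes the ratio of the two optima equal to $(1-\delta)\ln u$, which is the claimed threshold.

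The main obstacle is not the gadget composition but the quantitative calibration that drives the ratio all the way to the optimal constant $(1-\delta)$ \emph{under plain $\mathrm P\neq\mathrm{NP}$}. Two points are delicate: constructing (or derandomizing) the partition system so that its across-partition covering threshold is within a $(1-\varepsilon)$ factor of the information-theoretic $k\ln m$, and---more seriously---producing {\sc Label Cover} whose soundness $\gamma$ is small enough to overwhelm every lower-order loss in the composition, yet whose alphabet and instance size blow up only mildly. Ordinary parallel repetition is too lossy to reach the sharp constant, and it is exactly here that the analytic parallel-repetition bound of Dinur and Steurer, or equivalently Moshkovitz's near-linear-size two-query projection PCP, becomes indispensable; this tight PCP machinery, rather than the combinatorial reduction itself, is the genuine heart of the argument.
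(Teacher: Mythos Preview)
The paper does not prove this theorem at all: it is stated with a citation to Dinur--Steurer and Moshkovitz and is used as a black box in the subsequent reduction. Your sketch is a faithful high-level outline of the argument in those cited works---Feige's partition-system reduction from projection {\sc Label Cover}, with the sharp constant coming from the improved parallel-repetition/PCP bounds---so there is nothing to compare against in the present paper.
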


\begin{proof}[of \Cref{thm:jr-inapprox}]
Given an instance of {\sc Set Cover}, we construct an instance of the $n/k$-justifying group problem as follows.
Let $k = u$.
First, we create $n = \ell \cdot u$ voters, with voters $1,2,\dots,u$ corresponding to the {\sc Set Cover} elements.
For every subset $S_i$, we create a candidate $c_{S_i}$ that are approved by exactly the voters in $S_i$. 
In addition, for every $j \in [u]$, we construct a candidate $c^*_j$ that is approved by the $\ell$ voters $j, j + u, \dots, j + (\ell - 1)u$.

To begin with, note that a smallest $n/k$-justifying group of the constructed instance has size at most $\opt_{\setcov}(u, \calS)$, because simply picking the candidates corresponding to the {\sc Set Cover} solution yields an $n/k$-justifying group.

Next, suppose that for some constant $\varepsilon > 0$, there is a polynomial-time algorithm that can find an $n/k$-justifying group of size at most $(1-\varepsilon)\ln n$ times the size of a smallest $n/k$-justifying group; choose $u$ large enough so that $\ln u \ge (2\ln \ell)/\varepsilon$.
Then this algorithm will find an $n/k$-justifying group of size at most $z := (1-\varepsilon)\ln n \cdot \opt_{\setcov}(u, \calS)$. 
Notice that we may assume that this group does not contain any of the candidates $c^*_j$'s, because we may replace a candidate $c^*_j$ by an arbitrary candidate $c_S$ such that $S$ contains $j$. 
Therefore, we may assume that this group consists of candidates $c_{S_{i_1}}, \dots, c_{S_{i_z}}$.
Notice also that $S_{i_1} \cup \dots \cup S_{i_z} = [u]$; this is because, for every $j \in [u]$, voters $j, j + u, \dots, j + (\ell - 1)u$ approve a common candidate $c^*_j$ but $j + u, \dots, j + (\ell - 1)u$ do not approve any of the selected candidates. 
As a result, we have found a set cover of size $z$ in polynomial time.
Note that 
\begin{align*}
z &=  (1-\varepsilon)\ln n\cdot \opt_{\setcov}(u, \calS) \\
&= (1 - \varepsilon)(\ln u + \ln\ell)\cdot \opt_{\setcov}(u, \calS) \\
&\le (1 - \varepsilon)\left(\ln u + \frac{\varepsilon\ln u}{2}\right)\cdot \opt_{\setcov}(u, \calS) \\
&\le \left(1 - \frac{\varepsilon}{2}\right)\ln u\cdot \opt_{\setcov}(u, \calS),
\end{align*}
which means that the set cover that we have found in polynomial time has size at most $(1-\delta)\ln u\cdot \opt_{\setcov}(u, \calS)$, where $\delta := \varepsilon/2$.
But from~\Cref{thm:setcover-inapprox}, this is NP-hard.
\hfill $\square$ 
\end{proof}

What happens if $n$ is larger than $k$ by a superconstant factor?
Even in that case, the reduction in \Cref{thm:jr-inapprox} can still be modified to yield a similar hardness result.

\begin{theorem}
Let $d > 1$ be a constant integer.
For any constant $\varepsilon > 0$, it is NP-hard to find an $n/k$-justifying group that is at most $\frac{1}{d}(1-\varepsilon)\ln n$ times larger than a smallest $n/k$-justifying group, even when $n = k^d$.
\end{theorem}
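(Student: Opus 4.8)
The plan is to adapt the reduction used to prove \Cref{thm:jr-inapprox}, again reducing from \textsc{Set Cover} and invoking the inapproximability in \Cref{thm:setcover-inapprox}. As before I would set $k = u$ and create one candidate $c_{S_i}$ for each set $S_i$, approved by exactly the element voters corresponding to $S_i$. The only change is to recalibrate the padding: for each element $j \in [u]$ I introduce a candidate $c^*_j$ approved by voter $j$ together with $u^{d-1} - 1$ fresh padding voters that approve nothing else. This yields a total of $u + u(u^{d-1}-1) = u^d$ voters, so that $n = u^d = k^d$ and the representation threshold is $n/k = u^{d-1}$; crucially, the cohesive group consisting of voter $j$ and its dedicated padding voters has size exactly $u^{d-1} = n/k$, while any cohesive group made up solely of padding voters has size at most $u^{d-1}-1 < n/k$.

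With this construction the optimal justifying-group size coincides with $\opt_{\setcov}(u,\calS)$. For one direction, picking the candidates $c_{S_i}$ of any set cover covers every element voter, and since every cohesive group of size at least $n/k$ must contain an element voter (a purely-padding group is too small), this group is $n/k$-justifying; hence the smallest justifying group has size at most $\opt_{\setcov}(u,\calS)$. For the converse, given any justifying group I would first replace each selected $c^*_j$ by an arbitrary $c_S$ with $j \in S$, which keeps the size unchanged and preserves the justifying property. The resulting group uses only set candidates, and the size-$u^{d-1}$ cohesive group $\{j\}\cup\{\text{padding of }j\}$ forces voter $j$ to approve a selected $c_S$; ranging over $j$, the selected sets cover $[u]$, yielding a set cover of the same size.

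The approximation bound then follows from a clean arithmetic cancellation: since $\ln n = \ln(u^d) = d\ln u$, a justifying group of size at most $\frac{1}{d}(1-\varepsilon)\ln n \cdot \opt_{\setcov}(u,\calS)$ translates, via the swap above, into a set cover of size at most $(1-\varepsilon)\ln u \cdot \opt_{\setcov}(u,\calS)$, which is NP-hard to find by \Cref{thm:setcover-inapprox} (taking $\delta = \varepsilon$). Note that, unlike the constant-ratio case of \Cref{thm:jr-inapprox}, the factor $\frac{1}{d}$ here absorbs the blow-up exactly, so no additional slack (the $\varepsilon/2$ trick) is required. The main point to get right---rather than a genuine obstacle---is calibrating the padding count to $u^{d-1}-1$ so that the forcing cohesive group sits exactly at the threshold $n/k$ while every purely-padding group falls just below it, and checking that the $c^*_j \to c_S$ swap does not destroy justification for any other large cohesive group (which holds because the padding voters of $c^*_j$ share no other common candidate).
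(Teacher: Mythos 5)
Your proposal is correct and takes essentially the same approach as the paper: the paper's proof likewise reuses the reduction from \Cref{thm:jr-inapprox} with $\ell = n/k = k^{d-1} = u^{d-1}$ (i.e., exactly your calibration of $u^{d-1}-1$ dedicated padding voters per element), applies the same $c^*_j \to c_S$ replacement, and exploits the same exact cancellation $\tfrac{1}{d}(1-\varepsilon)\ln n = (1-\varepsilon)\ln u$ to invoke \Cref{thm:setcover-inapprox}. Your explicit verification that the swap preserves justification and that purely-padding cohesive groups fall below the threshold fills in details the paper leaves as ``a similar argument,'' but the substance is identical.
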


\begin{proof}
We use the same construction as in the proof of \Cref{thm:jr-inapprox}, choosing $k = u$ and $\ell = n/k = k^{d-1}$.
If for some constant $\varepsilon > 0$ there is a polynomial-time algorithm that can find an $n/k$-justifying group of size at most $\frac{1}{d}(1-\varepsilon)\ln n$ times the size of a smallest $n/k$-justifying group, then a similar argument as in the proof of \Cref{thm:jr-inapprox} shows that we can use this algorithm to find a set cover of size at most $\frac{1}{d}(1-\varepsilon)\ln n\cdot \opt_{\setcov}(u, \calS)
= (1-\varepsilon)\ln u\cdot \opt_{\setcov}(u, \calS)$ in polynomial time.
But from~\Cref{thm:setcover-inapprox}, this is NP-hard.
\hfill $\square$ 
\end{proof}

\section{Preference Restrictions}
\label{app:restriction}

We have shown in \Cref{sec:tree-representation} that the problem of computing a smallest $n/k$-justifying group can be solved efficiently for instances that admit a tree representation (TR).
In order to better understand the class TR and related preference restriction classes, we explore the relationships between them in this section.

First, we exhibit that TR contains a recently introduced class called \emph{1D voter/candidate range model (1D-VCR)} \cite{GBSF21}.
For convenience, we state the definition of 1D-VCR here.
For a 1D-VCR instance $I=(C, \calA, k)$, each $a \in C \cup N$ has a center of influence $x_a$ and a radius of influence $r_a$.
We denote by $s(a):=x_a-r_a$ and $t(a):=x_a+r_a$ the leftmost and the rightmost point of $a$'s range of influence, respectively.
We call $J_a := [s(a),t(a)]$ the \emph{interval} of $a$. 
A voter $i\in N$ approves a candidate $c\in C$ if and only if $c$'s interval $[s(c), t(c)]$ and $i$'s interval $[s(i), t(i)]$ have a non-empty intersection, i.e.,
\begin{align*}
s(c)  \le t(i) ~\mbox{and}~ s(i) \le t(c).
\end{align*}

Before establishing the containment, we prove some basic properties of 1D-VCR
preferences.  Our first observation is that if a candidate $c'$ is
more appealing than candidate $c$, in the sense that the interval of
$c$ is contained in that of $c'$, then every voter who
approves $c$ also approves~$c'$.

\begin{lemma}\label{lem0}
Consider a 1D-VCR instance. If a voter $i$ approves candidate $c$, then $i$ approves any candidate $c'$ whose interval contains that of $c$, i.e., $J_c \subseteq J_{c'}$.
\end{lemma}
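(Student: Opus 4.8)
The plan is to unpack the 1D-VCR approval condition directly and use the interval containment hypothesis. Recall that voter $i$ approves candidate $c$ precisely when the intervals $J_i = [s(i), t(i)]$ and $J_c = [s(c), t(c)]$ intersect, which by the stated criterion means $s(c) \le t(i)$ and $s(i) \le t(c)$. I would start from the assumption that $i$ approves $c$, so both inequalities $s(c) \le t(i)$ and $s(i) \le t(c)$ hold. The goal is to derive the analogous pair of inequalities for $c'$, namely $s(c') \le t(i)$ and $s(i) \le t(c')$, which is exactly the condition for $i$ to approve $c'$.

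The key step is to translate the containment $J_c \subseteq J_{c'}$ into the endpoint inequalities $s(c') \le s(c)$ and $t(c) \le t(c')$. First I would chain these with the approval hypothesis: since $s(c') \le s(c)$ and $s(c) \le t(i)$, transitivity gives $s(c') \le t(i)$. Symmetrically, since $s(i) \le t(c)$ and $t(c) \le t(c')$, transitivity gives $s(i) \le t(c')$. Together these two inequalities are precisely the 1D-VCR condition for $i$ approving $c'$, so the proof concludes.

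I do not expect any real obstacle here: the statement is essentially a monotonicity observation, and the only subtlety is being careful that interval containment translates to the correct direction of the endpoint inequalities (the larger interval has the smaller left endpoint and the larger right endpoint). The whole argument is a short chain of transitive inequalities on the real line, so the proof should be a few lines at most.
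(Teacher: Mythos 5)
Your proof is correct and matches the paper's argument in substance: the paper simply observes that $J_i \cap J_c \neq \emptyset$ together with $J_c \subseteq J_{c'}$ gives $J_i \cap J_{c'} \neq \emptyset$, which is the same monotonicity fact you establish. You merely unfold the definition into the endpoint inequalities $s(c') \le s(c) \le t(i)$ and $s(i) \le t(c) \le t(c')$, which is a slightly more explicit but equivalent way of writing the paper's one-line set-containment step.
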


\begin{proof}
Since voter $i$ approves candidate $c$, we have $J_i\cap J_c\neq\emptyset$.
Since $J_c\subseteq J_{c'}$, it holds that $J_i\cap J_{c'}\neq\emptyset$.
We conclude that $i$ approves $c'$.
\hfill $\square$ 
\end{proof}

An interval $[x,y]$ is said to be \emph{nested} in
another interval $[x',y']$ if $x' \le x$ and $y \le y'$, with at least
one inequality being strict.  
The next lemma ensures that if a voter $i$ approves two candidates whose intervals are not nested in
each other's, then $i$ approves any ``intermediate'' candidate whose
interval lies between the intervals of the two approved candidates.

\begin{lemma}\label{lem:ordering}
Consider a 1D-VCR instance. If a voter $i$ approves candidates $a$ and $b$ with $s(a) \le s(b)$ and $t(a) \le t(b)$, then $i$ also approves any candidate $c$ such that $s(a) \le s(c) \le s(b)$ and $t(a) \le t(c) \le t(b)$.
\end{lemma}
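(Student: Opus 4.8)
The plan is to prove Lemma~\ref{lem:ordering} directly from the definition of 1D-VCR approval, namely that voter $i$ approves candidate $c$ if and only if the intervals $J_i = [s(i), t(i)]$ and $J_c = [s(c), t(c)]$ intersect, which is equivalent to the two inequalities $s(c) \le t(i)$ and $s(i) \le t(c)$. The hypotheses give us information about an intermediate candidate $c$ sandwiched between $a$ and $b$: we know $s(a) \le s(c) \le s(b)$ and $t(a) \le t(c) \le t(b)$, and we know that $i$ approves both $a$ and $b$. The goal is to verify the two intersection inequalities for $c$.

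First I would write out what the approval of $a$ and $b$ gives us. From $i$ approving $a$ we get $s(a) \le t(i)$ and $s(i) \le t(a)$; from $i$ approving $b$ we get $s(b) \le t(i)$ and $s(i) \le t(b)$. Then I would verify the two inequalities needed for $c$ one at a time. For the inequality $s(c) \le t(i)$: since $s(c) \le s(b)$ by the sandwiching hypothesis and $s(b) \le t(i)$ from the approval of $b$, transitivity gives $s(c) \le t(i)$. For the inequality $s(i) \le t(c)$: since $s(i) \le t(a)$ from the approval of $a$ and $t(a) \le t(c)$ by the sandwiching hypothesis, transitivity gives $s(i) \le t(c)$. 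Having both inequalities, I conclude that $J_i \cap J_c \neq \emptyset$, so $i$ approves $c$.

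The proof is essentially a pair of two-step transitivity chains, so there is no real obstacle; the only subtlety is bookkeeping—making sure to pick the correct endpoint of $c$'s interval and the correct one of $a$ or $b$ in each chain. Specifically, the left-endpoint constraint on $c$ is controlled through $b$ (the candidate with the larger left endpoint), while the right-endpoint constraint is controlled through $a$ (the candidate with the smaller right endpoint). I expect the whole argument to fit in a few lines.

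\begin{proof}
Recall that voter $i$ approves candidate $c$ if and only if $s(c) \le t(i)$ and $s(i) \le t(c)$. We verify these two inequalities. Since $i$ approves $b$, we have $s(b) \le t(i)$, and combined with the hypothesis $s(c) \le s(b)$ this yields $s(c) \le t(i)$. Since $i$ approves $a$, we have $s(i) \le t(a)$, and combined with the hypothesis $t(a) \le t(c)$ this yields $s(i) \le t(c)$. Hence $J_i \cap J_c \neq \emptyset$, so $i$ approves $c$.
\hfill $\square$
\end{proof}
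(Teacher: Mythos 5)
Your proof is correct and matches the paper's own argument exactly: both establish $s(c) \le s(b) \le t(i)$ using the approval of $b$, and $s(i) \le t(a) \le t(c)$ using the approval of $a$, then conclude $J_i \cap J_c \neq \emptyset$. No differences worth noting.
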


\begin{proof}
Since voter $i$ approves candidates $a$ and $b$, we have $s(i) \le t(a)$ and $s(b) \le t(i)$.
Hence, $s(c) \le s(b) \le t(i)$ and $s(i) \le t(a) \le t(c)$.
This means that $i$ approves $c$, as claimed.
\hfill $\square$ 
\end{proof}

We are now ready to show the containment relation between 1D-VCR and TR.

\begin{proposition}
Every 1D-VCR instance admits a TR. 
Moreover, such a TR can be computed in polynomial time.
\end{proposition}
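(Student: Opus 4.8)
The plan is to build the tree directly from the interval structure of the candidates, placing the \emph{most appealing} (containment-maximal) intervals along a single path and hanging every other candidate below an interval that contains it. Concretely, view each candidate $c$ as the interval $J_c = [s(c), t(c)]$, and recall that by Lemma~\ref{lem0} the set of candidates approved by a voter $i$ is upward-closed under interval containment: if $i$ approves $c$ and $J_c \subseteq J_{c'}$, then $i$ approves $c'$.

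First I would identify the candidates whose intervals are maximal under containment. Any two such maximal intervals are incomparable, so after discarding duplicate intervals the maximal candidates can be listed as $a_1, \ldots, a_p$ with $s(a_1) < \cdots < s(a_p)$ and $t(a_1) < \cdots < t(a_p)$ (a ``staircase'': if two maximal intervals had, say, equal left endpoints, then one would contain the other). I connect them into the path $a_1 - a_2 - \cdots - a_p$ and root the tree at $a_1$. For every non-maximal candidate $c$, I assign as its parent some candidate $c'$ with $J_c \subsetneq J_{c'}$; following parents yields a strictly increasing chain of intervals, which must terminate at one of the maximal candidates. Duplicate intervals (candidates with identical $J$) are approved by exactly the same voters, so they can be chained together arbitrarily and attached to the structure. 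Computing the intervals, the containment relation, the maximal elements and their order, and the parent assignments is clearly polynomial, which settles the ``moreover'' part.

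It remains to show that for each voter $i$, the approved set $S_i$ induces a connected subtree, which I would argue in two parts. For the maximal candidates: if $i$ approves $a_j$ and $a_{j'}$ with $j < j'$, then every intermediate $a_m$ satisfies $s(a_j) \le s(a_m) \le s(a_{j'})$ and $t(a_j) \le t(a_m) \le t(a_{j'})$, so Lemma~\ref{lem:ordering} forces $i$ to approve $a_m$ as well; hence the approved maximal candidates form a contiguous---and therefore connected---subpath of $a_1 - \cdots - a_p$. For a non-maximal approved candidate $c$, the path from $c$ up to its terminating maximal ancestor consists of parents, each containing $J_c$, so by Lemma~\ref{lem0} every node on this path is also approved; in particular $c$'s maximal ancestor is approved and thus lies on the contiguous subpath above. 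Consequently every approved candidate is joined by an all-approved path to the connected block of approved maximals, proving that $S_i$ is connected.

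The main obstacle I anticipate is the bookkeeping needed to make the connectivity argument airtight: verifying that the maximal intervals really do form a totally ordered staircase (so that Lemma~\ref{lem:ordering} can be applied along the path), confirming that each parent chain terminates at a maximal candidate rather than cycling, and handling the degenerate case of candidates sharing identical intervals. Once the staircase structure of the maximal elements is established, the two supplied lemmas do essentially all the work and the remainder is routine verification.
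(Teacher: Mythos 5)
Your proposal is correct and follows essentially the same route as the paper's proof: a path of containment-maximal (unnested) candidates ordered by their endpoints, with every other candidate hung below a strictly containing interval, and connectivity established exactly via Lemma~\ref{lem0} along the vertical parent chains and Lemma~\ref{lem:ordering} along the horizontal path. The only cosmetic difference is your treatment of duplicate intervals (discarding them to get a strict staircase and chaining them back in), where the paper instead keeps them and orders the path with weak inequalities, letting its walk-based connectivity argument absorb ties automatically.
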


\begin{proof}
Let $I = (C,\mathcal{A},k)$ be a 1D-VCR instance with voter set $N$. 
We construct a tree $T$ as follows. 
Consider a maximal unnested subset of $C$---call it $C_0$---where a candidate is said to be \emph{nested} if its interval is nested in another candidate's interval.
We reindex the candidates in $C_0$ so that $s(c_1) \leq \dots \leq s(c_{\ell})$ and $t(c_1) \leq \dots \leq t(c_{\ell})$, where $\ell := |C_0|$.
Then, we add the path $(c_1, \dots, c_{\ell})$ to $T$, call these nodes the ``level-$0$ nodes'', and define $C' := C \setminus C_0$. 
For the remaining candidates in $C'$, we iteratively apply the following procedure: pick a candidate $c \in C'$ whose interval is not nested in the interval of any other candidate in $C'$, and make $c$ a child of a node in $T$ whose interval strictly contains $J_c$ (such a node exists by definition of $C_0$), breaking ties in favor of nodes with a higher level.
Remove $c$ from $C'$, and define the level of $c$ as the level of its parent plus $1$. 

We claim that the following two statements hold:
\begin{enumerate}[label=(\roman*)]
\item Let $c \in A_i$ for some $i \in N$, and $c' \in C_0$ be the level-$0$ ancestor of $c'$ (possibly $c=c'$). 
Then, all candidates on the path from $c$ to $c'$ are in $A_i$. 
\item Let $\{c_p,c_q\} \subseteq A_i \cap C_0$ for some $i \in N$ and $1\le p<q\le \ell$. 
Then, $\{c_p,c_{p+1}, \dots, c_q\} \subseteq A_i$.
\end{enumerate}

To prove (i), let $d_0,d_1,\dots,d_{r}$ be the path from $c'$ to $c$; in particular, $c'=d_0$ and $c=d_r$. 
By construction of $T$, it holds that $J_{d_r} \subseteq J_{d_{r-1}} \subseteq \dots \subseteq J_{d_0}$. 
Since $d_r\in A_i$, Lemma~\ref{lem0} implies that $\{d_0,d_1,\dots,d_r\} \subseteq A_i$. 
As for (ii), since $s(c_p) \leq s(c_{p+1}) \leq \dots \leq s(c_{q})$ and $t(c_p) \leq t(c_{p+1}) \leq \dots \leq t(c_{q})$, Lemma~\ref{lem:ordering} together with the assumption that $\{c_p,c_q\}\subseteq A_i$ imply that $\{c_p,c_{p+1}, \dots, c_q\} \subseteq A_i$. 

Clearly, $T$ can be constructed in polynomial time; we now show that it is a valid tree representation of the instance $I$. 
To this end, fix $c,c' \in A_i$ for some voter $i \in N$. 
It suffices to show that there is a walk from $c$ to $c'$, possibly going through some nodes more than once, such that all candidates on this walk belong to $A_i$.
Consider a walk composed of the path from $c$ to its level-$0$ ancestor $d$, the path from $d$ to the level-$0$ ancestor $d'$ of $c'$, and the path from $d'$ to $c'$.
By (i), all nodes in the first and third paths are in $A_i$; by (ii), all nodes in the second path are in $A_i$ too.
This means that the entire walk is contained in $A_i$, completing the proof.
\hfill $\square$ 
\end{proof}

In the chain of tree representation classes depicted by Yang~\cite[Fig.~4]{yang2019tree}, the largest class contained in TR is the class of PTR.
An instance admits a \emph{path-tree representation (PTR)} if there exists a tree $T$ with vertex set corresponding to the candidate set $C$ such that the approval set of every voter induces a path in~$T$.
Below, we present examples demonstrating that PTR and 1D-VCR do not contain each other---this further highlights the generality of TR and also shows that the inclusion of 1D-VCR in TR is strict.

\begin{proposition}
There exists a 1D-VCR instance that admits no PTR. 
\end{proposition}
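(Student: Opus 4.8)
The plan is to exhibit an explicit 1D-VCR instance with four candidates and four voters whose approval structure forces a claw (a vertex of degree three) that cannot be embedded into any path, so that no PTR can exist.

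First I would fix the candidate intervals $J_{c_1} = [0,10]$, $J_{c_2} = [0,2]$, $J_{c_3} = [4,6]$, and $J_{c_4} = [8,10]$, so that $c_1$ is a long ``central'' candidate overlapping each of the three short candidates $c_2, c_3, c_4$, while $c_2, c_3, c_4$ are pairwise disjoint. I would then place four voters: three narrow voters $v_1, v_2, v_3$ located inside $J_{c_2}$, $J_{c_3}$, $J_{c_4}$ (for instance at the points $1$, $5$, $9$), and one wide voter $v_4$ with $J_{v_4} = [0,10]$. A direct check of the intersection conditions $s(c) \le t(i)$ and $s(i) \le t(c)$ shows that $v_1, v_2, v_3$ approve exactly $\{c_1,c_2\}$, $\{c_1,c_3\}$, $\{c_1,c_4\}$, respectively, while $v_4$ approves $\{c_1,c_2,c_3,c_4\}$. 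If one insists on strictly positive radii of influence, replacing the three point voters by sufficiently short intervals around $1$, $5$, $9$ leaves the approval sets unchanged. This verification is the only routine calculation.

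Next I would argue that this instance admits no PTR. Suppose for contradiction that $T$ is a tree on $C = \{c_1,c_2,c_3,c_4\}$ in which every approval set induces a path. The crucial observation is that for a two-element set the only connected induced subgraph is a single edge, so the approval sets of $v_1, v_2, v_3$ force the edges $c_1c_2$, $c_1c_3$, and $c_1c_4$ all to be present in $T$. Consequently $c_1$ has degree at least $3$ in $T$.

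Finally I would derive the contradiction from $v_4$. Its approval set is all of $C$, so $T$ itself must be a path. But $T$ already contains the three edges incident to $c_1$, giving $c_1$ degree at least $3$, whereas every vertex of a path has degree at most $2$. This contradiction shows no such $T$ exists, so the constructed 1D-VCR instance admits no PTR. I expect the only subtle point to be the rigidity observation that size-two approval sets force edges in any PTR (making the claw at $c_1$ unavoidable); once that is in place, the rest is immediate.
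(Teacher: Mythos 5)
Your proposal is correct and is essentially the paper's own proof: your instance is isomorphic to the paper's (a central candidate approved together with each of three pairwise-disjoint short candidates by three narrow voters, plus one voter approving everything), realized by the same kind of interval layout. The contradiction is likewise the same, just stated in the opposite order---you build the claw at $c_1$ from the two-element approval sets and then invoke the full approval set to force a path, while the paper first forces the path and then observes that the three edges at the central candidate cannot fit.
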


\begin{proof}
Consider the following instance: $C = \{a,b,c,d\}$,  $n=4$, and
\begin{align*}
A_1 &= \{a,b,c,d\}, 
A_2 = \{a,b\}, A_3 = \{a,c\}, A_4=\{a,d\}.
\end{align*}
To see that this instance admits a 1D-VCR representation, consider the following intervals: 
\begin{align*}
J_a &= [0,5], J_b = [0,1],J_c = [2,3],J_d=[4,5]; \\ 
J_1 &= [0,5],J_2=[0,1],J_3=[2,3],J_4=[4,5].
\end{align*}
Now, assume for contradiction that the instance admits a path-tree representation. 
Then, since $A_1=\{a,b,c,d\}$, all four candidates must lie on a path in the tree. 
But then $a$ has at most two neighbors, so one of the approval sets $A_2$, $A_3$, and $A_4$ is not a path, a contradiction.
\hfill $\square$ 
\end{proof}

\begin{proposition}
There exists an instance that is not 1D-VCR but admits a PTR.
\end{proposition}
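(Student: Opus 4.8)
The plan is to exhibit a concrete small instance that admits a path-tree representation (PTR) but cannot be realized as a 1D-VCR instance. The strategy mirrors the preceding proposition: I would keep the candidate set tiny (four or five candidates) so that the PTR can be drawn explicitly and the 1D-VCR impossibility can be argued by hand. The key structural fact I would exploit is the ``no intermediate candidate'' obstruction captured by \Cref{lem:ordering}: in any 1D-VCR instance, if a voter approves two candidates $a$ and $b$ with $s(a)\le s(b)$ and $t(a)\le t(b)$, then that voter must approve every candidate whose interval lies between them. So I would design a profile in which some voter's approval set is a path in a tree but is forced to ``skip over'' a candidate that any linear 1D-VCR arrangement would compel it to approve.

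First I would choose a star-like tree: take a central candidate $a$ with three (or more) leaves $b, c, d$ attached to it, which already forces any 1D-VCR representation of $a$-approving voters to place $a$'s interval so as to overlap all three leaves. The contrast with PTR is that a tree allows $a$ to have degree three, whereas the real line is one-dimensional. Concretely, I would try $C=\{a,b,c,d\}$ with approval sets such as $A_1=\{b,a,c\}$, $A_2=\{c,a,d\}$, and a voter whose approval set is $\{b,d\}$ but which omits $a$. In the tree $b-a-c-a-d$ realized as paths through the center $a$, each $A_i$ is a path; but the troublesome voter approving $\{b,d\}$ without $a$ creates the obstruction, since in 1D-VCR any voter overlapping both $b$ and $d$ must also overlap anything lying between them.

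The main obstacle will be pinning down the exact approval sets so that (a) they genuinely form paths in a single fixed tree $T$ (so PTR holds), and (b) the 1D-VCR impossibility is airtight rather than merely plausible. For part (b) I would argue by contradiction: assume a 1D-VCR representation exists, reindex the candidate intervals by their left endpoints (as in the proof of the TR-containment proposition), and then apply \Cref{lem0} and \Cref{lem:ordering} to derive that some voter is forced to approve a candidate it does not, contradicting the prescribed profile. The delicate point is ruling out \emph{all} relative orderings of the four intervals, including the nested cases where \Cref{lem:ordering} does not directly apply; there I would invoke \Cref{lem0} to handle containment and a short case analysis on which candidate's interval is innermost.

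Once the instance is fixed, verifying PTR is a one-line check (drawing the tree and confirming each $A_i$ is a path), so the entire weight of the proof rests on the 1D-VCR impossibility. I would therefore keep the instance as symmetric as possible to make the case analysis short, most likely settling on four candidates with a degree-three center, and present the argument as: ``the tree $T$ gives a PTR by inspection; conversely, any 1D-VCR realization would, by \Cref{lem:ordering}, force the voter approving the two extreme leaves to approve the center as well, which it does not.''
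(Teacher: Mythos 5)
There is a genuine gap: the concrete instance you float does not admit a PTR, so it cannot witness the proposition. Your troublesome voter approves $\{b,d\}$ but not $a$; in the star with center $a$, the set $\{b,d\}$ induces two isolated vertices, not a path. Nor can any other tree on $\{a,b,c,d\}$ rescue it: $\{b,d\}$ being a path forces the edge $bd$; $\{a,b,c\}$ being a path requires two edges among $\{ab,ac,bc\}$; $\{a,c,d\}$ requires two edges among $\{ac,ad,cd\}$; the only edge these last two demands can share is $ac$, so at least $1+2+2-1=4$ edges are needed, but a tree on four vertices has only three. The deeper problem is that your chosen obstruction---a voter approving two ``extreme'' candidates while skipping the one between them---is exactly what PTR forbids whenever the skipped candidate lies on the tree path between the two approved ones. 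So the very configuration you rely on to defeat 1D-VCR simultaneously destroys the path-tree representation, and no tinkering with which leaves are extreme will avoid this tension as long as the skipped candidate is the center.

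The paper's construction resolves this tension by making every voter approve the center and skip a \emph{leaf} instead: $C=\{a,b,c,d\}$ with $A_1=\{a,b,c\}$, $A_2=\{a,b,d\}$, $A_3=\{a,c,d\}$. Each approval set is then a path through the center of the star, so PTR holds by inspection. For the 1D-VCR impossibility, the symmetric structure does the work your sketch leaves open: for every pair $z,z'\in\{b,c,d\}$ some voter approves $z$ but not $z'$, so no leaf interval can be nested in another (this is how the nested cases you worried about get ruled out---not via \Cref{lem0}, but directly from the profile). One may then assume $s(b)\le s(c)\le s(d)$ and $t(b)\le t(c)\le t(d)$, and voter $2$, who approves $b$ and $d$ but not $c$, must have an interval disjoint from $J_c$; whichever side of $J_c$ it lies on, it misses $J_d$ or $J_b$, a contradiction. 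Note that voter $2$ \emph{does} approve the center $a$, which is what keeps the tree side alive. Your high-level instinct (use the linear ordering of intervals and an ``intermediate candidate'' obstruction in the spirit of \Cref{lem:ordering}) is sound, but to complete the proof you must redesign the instance so the skipped candidate is a middle leaf rather than the tree-center, and you must include enough voters to force the pairwise non-nestedness of the leaf intervals.
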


\begin{proof}
Consider the following instance: $C=\{a,b,c,d\}$, $n=3$, and 
\[
A_1 = \{a,b,c\}, A_2 = \{a,b,d\}, A_3 = \{a,c,d\}.
\]
This instance admits a PTR representation in which $a$ is the center of a star graph with three leaves. 
Now, assume for contradiction that the instance admits a 1D-VCR representation, and let 
$J_b=[s(b),t(b)]$, $J_c=[s(c),t(c)]$, and $J_d=[s(d),t(d)]$
be the intervals of the candidates $b$, $c$, and $d$.
For every pair of candidates $z,z'\in\{b,c,d\}$, there exists a voter who approves $z$ but not $z'$, so none of the intervals $J_b$, $J_c$, $J_d$ can be nested in another one.
Hence, we may assume without loss of generality that $s(b) \leq s(c) \leq s(d)$ and $t(b) \leq t(c) \leq t(d)$. 
Let $J_2 = [s(2), t(2)]$ be the interval of voter~$2$. 
As $c\not\in A_2$, we have $J_2\cap J_c=\emptyset$, so either
$t(2)<s(c)$ or $s(2)>t(c)$.
In the former case, $J_2\cap J_d=\emptyset$, contradicting $d\in A_2$, while in the latter case, $J_2\cap J_b=\emptyset$, contradicting $b\in A_2$.
\hfill $\square$ 
\end{proof}

\section{Additional Experiments}
\label{app:experiments}

\begin{figure*}[t!]
\begin{subfigure}{\textwidth} \centering
     \includegraphics[scale =0.21]{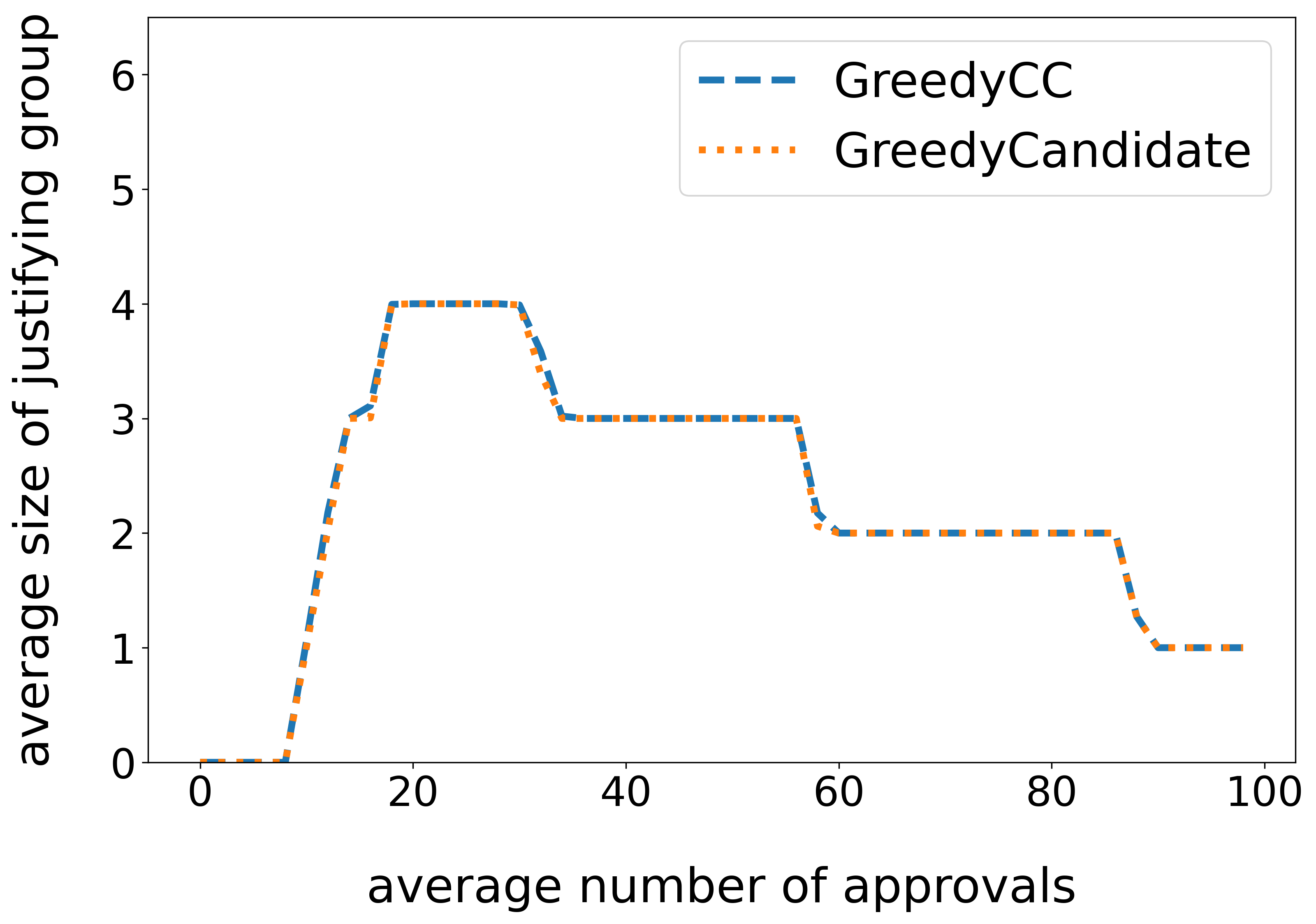}\caption{IC Model } \label{fig:exp3IC}
     \end{subfigure}
     
     \vspace{2mm}
     \begin{subfigure}{0.5\textwidth}\centering
     \includegraphics[scale = 0.21]{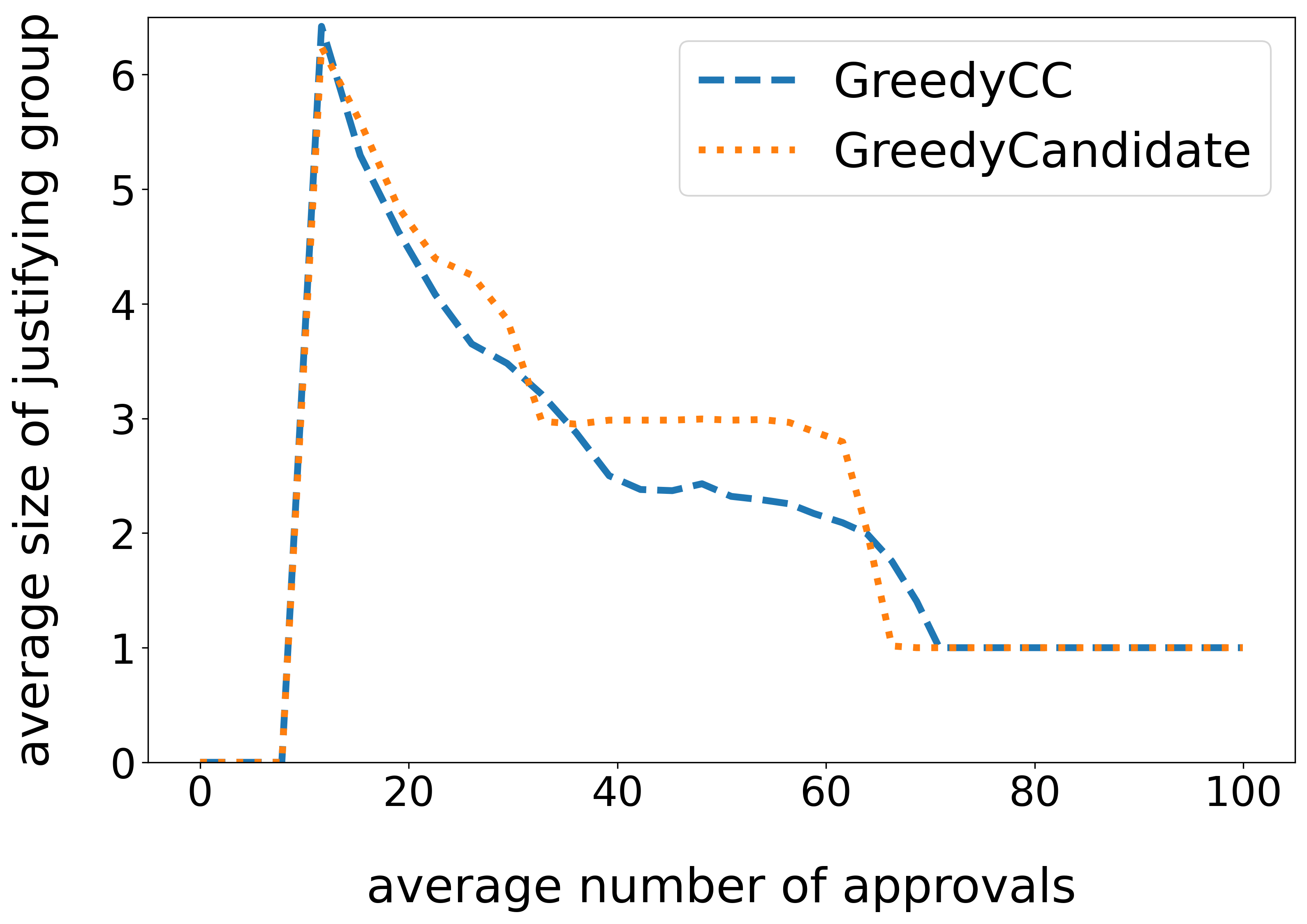}\caption{1D-Euclidean Model}
     \end{subfigure}
     \begin{subfigure}{0.5\textwidth}\centering
     \includegraphics[scale=0.21]{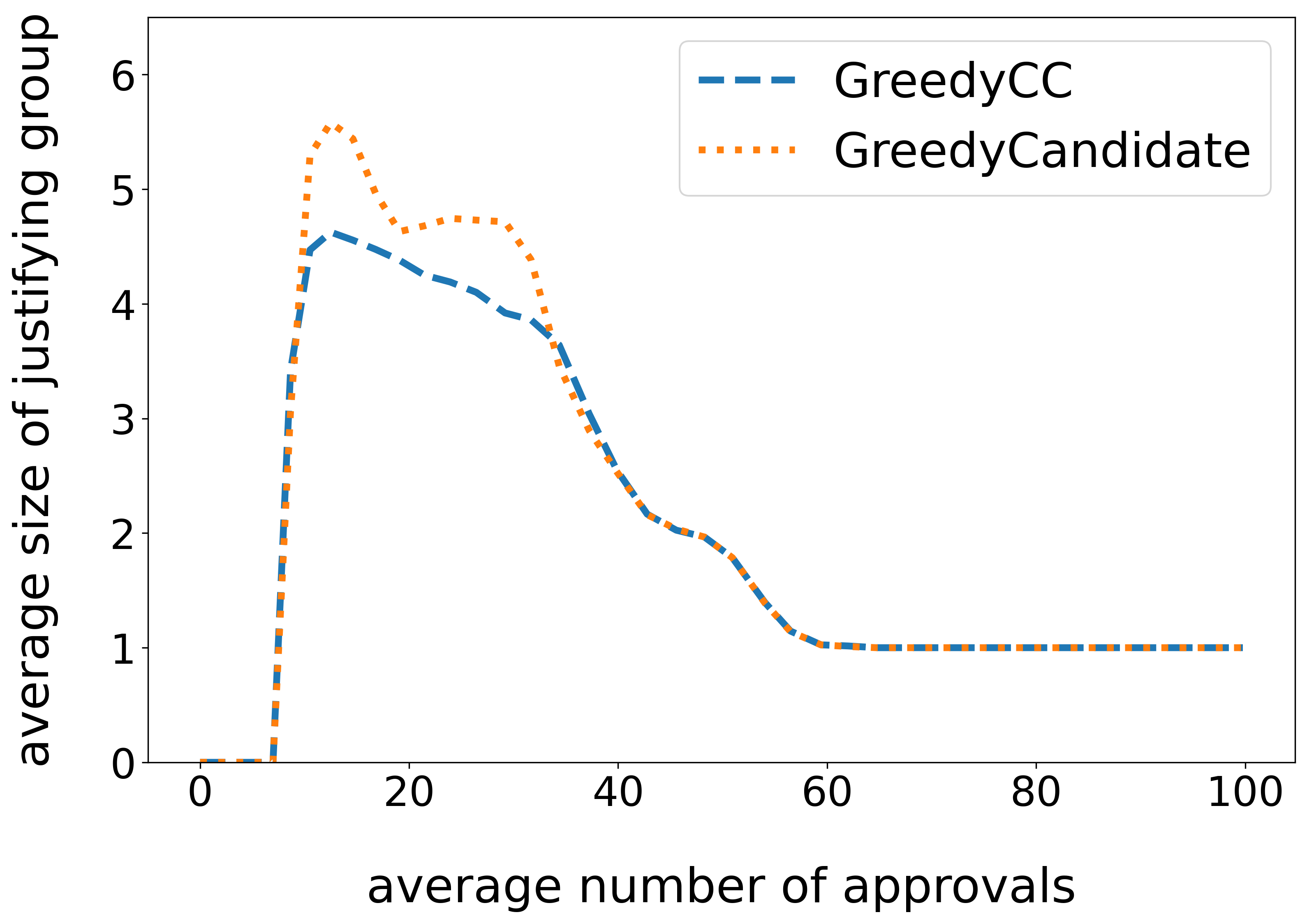}\caption{2D-Euclidean Model}
     \end{subfigure}
    \caption{Experimental results on the performance of GreedyCC and GreedyCandidate with $n=5000$. 
    For each plot, the $x$-axis shows the average number of approvals of a voter for each parameter $p$ (or $r$), and the $y$-axis shows the average size of an $n/k$-justifying group output by GreedyCC and GreedyCandidate.
    }\label{fig:experiment3}
\end{figure*}

We repeated our experiments from \Cref{fig:experiment2} with an increased number of voters: we set $n=5000$ while keeping the remaining parameters unchanged. 
More precisely, we created elections with parameters $m=100$ and $k=10$, and iterated over $p \in [0,1)$ (for the IC model), $r \in [0,1)$ (for the 1D model), and $r \in [0,1.2)$ (for the 2D model), each in increments of $0.02$. 
For each value of $p$ (or $r$), we generated $200$ elections and computed the size of the $n/k$-justifying group returned by GreedyCC and GreedyCandidate, respectively (unfortunately, due to the high number of voters, computing the size of a smallest $n/k$-justifying group was infeasible). 
We aggregated these numbers across different elections by computing their average.
As in the previous experiments, to make the plots for different models comparable, we converted the values of $p$ and $r$ to the average number of approvals induced by these values. 
The results are shown in Figure~\ref{fig:experiment3}.

The plot for the IC model shows a clear step function. 
Considering the corresponding plot in \Cref{fig:experiment2}, it appears likely that this function also represents the size of a smallest $n/k$-justifying group. 
It is worth noting that for large values of $n$, \Cref{thm:average-case} suggests that the size of a smallest $n/k$-justifying group in the IC model can be predicted from the parameters $p$ and $k$: specifically, it is $\tau(p,k) := \lceil-\log_{1-p}(kp)\rceil$.
If all groups of size $\tau(p,k)$ are $n/k$-justifying while all smaller groups are not, then both GreedyCC and GreedyCandidate return a group of this size.
A closer look at our data indicates that this behavior occurs for most values of $p$, as the standard deviation of the size of the returned group is extremely small. 
In particular, it is $0$ for almost all values of $p$ with both algorithms, and below $0.5$ for all parameters. 

By contrast, for the two Euclidean models, the plots are relatively far from step functions. 
It is also unclear how the corresponding plots for the size of a smallest $n/k$-justifying group would look like.
Moreover, the standard deviation of the size of the group returned by GreedyCC and GreedyCandidate is significantly larger than in the IC model. 
Specifically, the standard deviation is nonzero for a large fraction of values of $r$, and can be as high as $0.9$.

\end{document}